\newcommand{\Tr}{\operatorname{Tr}}
\newcommand{\I}{\mathds{1}}
\renewcommand{\t}{{\scriptscriptstyle\mathsf{T}}}
\renewcommand{\dim}{\mathrm{dim}}
\newcommand\im{\operatorname{im}}
\newcommand{\CC}{\mathbb{C}}
\newcommand{\RR}{\mathbb{R}}
\newcommand{\calC}{\mathcal{C}}
\newcommand{\calV}{\mathcal{V}}
\newcommand{\calW}{\mathcal{W}}
\newcommand{\calX}{\mathcal{X}}
\newcommand{\calY}{\mathcal{Y}}
\newcommand{\calZ}{\mathcal{Z}}
\renewcommand\L{\mathrm{L}}
\newcommand\Pos{\mathrm{Pos}}
\newcommand\Pd{\mathrm{Pd}}
\newcommand\Herm{\mathrm{Herm}}
\newcommand\C{\mathrm{C}}
\newcommand\D{\mathrm{D}}
\newcommand\CP{\mathrm{CP}}
\newcommand{\F}{\operatorname{F}}
\newcommand\dom{\operatorname{dom}}
\newcommand\relint{\operatorname{relint}}
\newcommand{\tinyspace}{\mspace{1mu}}
\newcommand{\op}[1]{\operatorname{#1}}
\newcommand{\complex}{\mathbb{C}}
\newcommand{\norm}[1]{\lVert\tinyspace #1 \tinyspace\rVert}
\newcommand{\bignorm}[1]{\bigl\lVert\tinyspace #1 \tinyspace\bigr\rVert}
\newcommand{\Bignorm}[1]{\Bigl\lVert\tinyspace #1 \tinyspace\Bigr\rVert}
\newcommand{\biggnorm}[1]{\biggl\lVert\tinyspace #1 \tinyspace\biggr\rVert}
\theoremstyle{plain}
\newtheorem{theorem}{Theorem}
\newtheorem{lemma}[theorem]{Lemma}
\newtheorem{corollary}[theorem]{Corollary}
\newtheorem{proposition}[theorem]{Proposition}
\newtheorem{conjecture}[theorem]{Conjecture}
\theoremstyle{definition}
\newtheorem{remark}[theorem]{Remark}
\begin{document}

\title{%
  \makebox{Certifying optimality for convex quantum channel}
  \makebox{optimization problems}}

\author{Bryan Coutts}
\affiliation{Institute for Quantum Computing, University of Waterloo, Canada}
\affiliation{School of Computer Science, University of Waterloo, Canada}
\author{Mark Girard}
\orcid{0000-0003-4862-9492}
\affiliation{Institute for Quantum Computing, University of Waterloo, Canada}
\author{John Watrous}
\affiliation{Institute for Quantum Computing, University of Waterloo, Canada}
\affiliation{School of Computer Science, University of Waterloo, Canada}
\affiliation{Canadian Institute for Advanced Research, Toronto, Canada}
\orcid{0000-0002-4263-9393}
\maketitle

\begin{abstract}
  We identify necessary and sufficient conditions for a quantum channel to be
  optimal for any convex optimization problem in which the optimization is
  taken over the set of all quantum channels of a fixed size.
  Optimality conditions for convex optimization problems over the set of all
  quantum measurements of a given system having a fixed number of measurement
  outcomes are obtained as a special case.
  In the case of linear objective functions for measurement optimization
  problems, our conditions reduce to the well-known Holevo--Yuen--Kennedy--Lax
  measurement optimality conditions.
  We illustrate how our conditions can be applied to various state
  transformation problems having non-linear objective functions based on the
  fidelity, trace distance, and quantum relative entropy.
\end{abstract}

\section{Introduction}

Several problems and settings that arise in quantum information theory can
be expressed as \emph{optimization problems} in which a real-valued function,
defined for a class of quantum channels or measurements, is either minimized or
maximized.
The problem of \emph{minimum error quantum state discrimination}
\cite{BarnettC09}, in which a quantum state randomly selected from a known
ensemble of states is to be identified with the smallest possible probability
of error by means of a measurement, provides a well-known example.
This problem is naturally expressed as the optimization of a real-valued linear
function defined on the set of all measurements with a fixed number of
outcomes.
Other examples arise in the study of \emph{quantum cloning}
\cite{Bruss1997,ScaraniIGA05} and the closely related notion of
\emph{quantum money} \cite{AaronsonFGHKL12},
where one is generally interested in knowing how well an optimally selected
quantum channel can transform a single copy of a given state into multiple
copies of the same state, with respect to a number of different figures of
merit.
Another example can be found in quantum complexity theory, in which
\emph{two-message quantum interactive proof systems} \cite{JainUW09} are
naturally analyzed as optimization problems in which the objective function
describes the probability that a given verifier accepts, and where the
optimization is over all quantum channels of a fixed size, which describe the
possible actions of a prover.

Concerning the optimization of \emph{linear functions} defined on the set of
all measurements with a fixed number of outcomes, necessary and sufficient
conditions for optimality were identified by Holevo
\cite{Holevo73-Kyoto,Holevo73-statistical} and Yuen, Kennedy,
and Lax \cite{YuenKL70,YuenKL75}.
These conditions, which are described explicitly later in this paper, are
relatively easy to check;
the problem of actually finding or approximating an optimal measurement, while
efficiently solvable through the use of semidefinite programming
\cite{JezekRF02,Ip03,EldarMV03}, is in general a more computationally involved
task.
These optimality conditions can be easily extended to obtain optimality
conditions for real-valued linear functions defined on the set of all quantum
channels transforming one quantum system to another.

We prove a generalization of these results to convex optimization problems
whose objective functions are not necessarily linear.
To be more precise, we consider optimization problems of the form
\begin{equation}
  \label{eq:cvxchannels}
  \begin{aligned}
    \text{minimize}\;\; & f(\Phi)\\
    \text{subject to}\;\; & \Phi\in\C(\calX,\calY),
  \end{aligned}
\end{equation}
where
\begin{equation}
  f:\C(\calX,\calY)\rightarrow\RR\cup\{\infty\}
\end{equation}
is a convex function.
Here, $\C(\calX,\calY)$ denotes the set of all channels (i.e., completely
positive and trace-preserving linear maps) from an input system to an output
system having associated complex Euclidean spaces $\calX$ and $\calY$,
respectively.
A channel $\Phi\in\C(\calX,\calY)$ is said to be \emph{optimal} for the
problem \eqref{eq:cvxchannels} if it is the case that $f(\Phi) \leq f(\Psi)$
for all $\Psi\in\C(\calX,\calY)$.
In this paper we do not consider the difficulty of finding or approximating
an optimal channel $\Phi$ for a given function $f$, but instead we focus only
on the task of \emph{verifying} that a given channel $\Phi$ is indeed optimal.
The optimality conditions we obtain can be easily checked for differentiable
functions $f$, and can also be used to verify optimality of channels for
some non-differentiable functions.
We stress that our optimality conditions are not generic optimality conditions
that hold for all convex optimization problems, but rather rely on a specific
structure that arises when the optimization is over all quantum channels of a
fixed size.

There are, of course, situations in which one would prefer a method to find an
optimal channel $\Phi$ for a chosen function $f$, as opposed to simply verifying
the optimality of a given $\Phi$, but the task of verifying optimality
nevertheless has value for multiple reasons.
For instance, one might hypothesize that a particular channel $\Phi$ is optimal
based upon an intuition concerning the function $f$, or upon a heuristic
method, making the task of verifying optimality essentially important.
The computational task of finding an optimal solution for a chosen function
$f$ might also be expensive, time-consuming, or delegated to an untrusted
computer, but once this task has been performed the optimality of the solution
can be verified, allowing anyone who performs the verification to trust in
the optimality of the solution.
Finally, there are situations in which the function $f$ could be indeterminate
in some respect, eliminating the possibility that a numerical computation
could reveal an optimal solution, but potentially allowing for an optimal
solution to be expressed and checked analytically.
(The results of Bacon, Childs, and van Dam \cite{BaconCvD05} on hidden subgroup
algorithms for certain groups provide a striking example of this potential.)

Our optimality conditions are applicable to convex optimization problems in
which the optimization is over all measurements having a fixed number of
outcomes, as opposed to being over all channels of a fixed size.
This is done through a standard correspondence between measurements and
\emph{quantum-to-classical channels}, described in the section following this
one.
We observe that Holevo \cite{Holevo73-Kyoto,Holevo73-statistical} also derived
optimality conditions for optimizations over measurements having differentiable
(but not necessarily convex) objective functions.
Holevo proved that these conditions are necessary for optimality, but did not
prove they are sufficient (as they are not sufficient in general).
When one restricts their attention to convex objective functions, our
optimality conditions are equivalent to a set of intermediate conditions
identified by Holevo, but not to the final set of conditions he identified.

We provide a few examples of how our optimality conditions can be applied to
interesting categories of optimization problems.
As a simple warm-up, we first explain how our conditions imply the
Holevo--Yuen--Kennedy--Lax conditions for the optimality of measurements for
minimum error state discrimination, which are easily extended to channel
optimization problems having linear objective functions.
We then discuss optimization problems relating to state transformations having
objective functions based on fidelity, trace distance, and quantum relative
entropy.
We note that optimization problems of the form we consider have recently been
studied in the context of \emph{recovery measures} (i.e., the
\emph{fidelity of recovery} and generalizations of this measure)
\cite{FawziR15,SeshadreesanW15,BrandaoHOS15,CooneyHMOSWW16,BertaT16,BertaFT17}.
Expositions of this topic can be found in \cite{Sutter18} and Chapter 12 of
\cite{Wilde17}.

\section{Background and notation}
\label{sec:cvx}\label{sec:preliminaries}

This section summarizes some concepts from convex analysis and optimization
theory, narrowly focused on their applications to this paper.
Further information on these topics can be found in \cite{Rockafellar1970},
\cite{Borwein2006}, \cite{Boyd2004}, and \cite{Mordukhovich2013}, for
instance.
We assume the reader is familiar with quantum information theory, which is
covered in the books \cite{NielsenC00}, \cite{Wilde17}, and \cite{watrous2018},
among others.
We will, however, summarize the notion of the Choi operator of a channel,
clarify its basic connection to the sorts of optimization problems we consider,
and discuss the correspondence between measurements and quantum-to-classical
channels, as these topics are essential to an explanation of our results.
It will also be helpful to begin the section by establishing some basic
notation and terminology concerning linear algebra.

\subsubsection*{Linear algebra notation and terminology}

When we refer to a \emph{complex Euclidean space}, we mean $\CC^n$ for some
positive integer $n$, or more generally the complex vector space consisting of
vectors indexed by an arbitrary finite set in place of the index set
$\{1,\ldots,n\}$.
The elementary unit vectors of the space $\CC^n$ are denoted
$e_1,\ldots,e_n$.
Complex Euclidean spaces will be denoted by capital calligraphic letters such
as $\calX$, $\calY$, and $\calZ$.

For a complex Euclidean space $\calX$, the space of linear operators on $\calX$
is denoted $\L(\calX)$, and the identity operator on $\calX$ is denoted
$\I_{\calX}$.
The adjoint (or conjugate-transpose) of an operator $X$ is denoted $X^{\ast}$.
For indices $j,k\in\{1,\ldots,n\}$, the operator $E_{j,k}\in\L(\CC^n)$ is
defined as $E_{j,k} = e_j e_k^{\ast}$.
Equivalently, with respect to the basis $\{e_1,\ldots,e_n\}$, the matrix
representation of $E_{j,k}$ has a 1 in the $(j,k)$ entry, with all other
entries~0.

The real vector space of Hermitian operators acting on a complex Euclidean
space $\calX$ is denoted $\Herm(\calX)$; the cone of positive semidefinite
operators acting on $\calX$ is denoted $\Pos(\calX)$; the set of positive
definite operators acting on $\calX$ is denoted $\Pd(\calX)$; and the set of
density operators (i.e., positive semidefinite operators having unit trace) is
denoted $\D(\calX)$.
The Hilbert--Schmidt inner product of two Hermitian operators
$X,Y\in \Herm(\calX)$ is given by
\begin{equation}
  \langle X,Y\rangle = \Tr(X Y).
\end{equation}
For a subspace $\calV\subseteq\calX$ of a complex Euclidean space $\calX$, we
write $\Pi_{\calV}\in\Pos(\calX)$ to denote the (orthogonal) projection
operator that projects onto the subspace $\calV$.
Finally, whenever we refer to the \emph{inverse} of a positive semidefinite
operator $X\in\Pos(\calX)$, it should be understood that we are referring to
the \emph{Moore--Penrose pseudo-inverse} of $X$ (i.e., the operator that acts as
the inverse of $X$ on the image of $X$ and zero on the kernel of $X$).

\subsubsection*{Convex functions taking real or infinite values}

Let $\calX$ be a complex Euclidean space and let
\begin{equation}
  \label{eq:possibly-infinite-function}
  f:\Herm(\calX)\rightarrow\RR\cup\{\infty\}
\end{equation}
be a function mapping each Hermitian operator to either a real number or to
positive infinity.
The \emph{domain} of $f$ is defined as
\begin{equation}
  \dom(f)=\{X\in\Herm(\calX) : f(X)\in\RR\}.
\end{equation}
For any function of the form $f:\calC\rightarrow\RR$ defined only on a
subset $\calC\subseteq\Herm(\calX)$, one may naturally extend $f$ to a
function of the form \eqref{eq:possibly-infinite-function} by defining
$f(X)=\infty$ whenever $X\not\in\calC$.

A function $f:\Herm(\calX)\rightarrow\RR\cup\{\infty\}$ is \emph{convex}
if $\dom(f)$ is a convex set and $f$ is convex on $\dom(f)$.
More explicitly, $f$ is convex if for all $X,Y\in\dom(f)$ and
$\lambda \in [0,1]$, it is the case that
$\lambda X + (1 - \lambda) Y\in \dom(f)$ and
\begin{equation}
  f(\lambda X + (1 - \lambda)Y) \leq
  \lambda f(X) + (1-\lambda) f(Y).
\end{equation}
A function $f$ of the form \eqref{eq:possibly-infinite-function} is
\emph{proper} if $\dom(f)\neq\varnothing$.

The \emph{indicator function} of a set $\calC\subseteq\Herm(\calX)$ is the
function
\begin{equation}
  I_\calC:\Herm(\calX)\rightarrow\RR\cup\{\infty\}
\end{equation}
defined as
\begin{equation}
  I_{\calC}(X) =
  \begin{cases}
    0       & X\in\calC\\
    \infty & X\notin\calC
  \end{cases}
\end{equation}
for all $X\in\Herm(\calX)$.
It is evident that $\dom(I_\calC)=\calC$ for every set
$\calC\subseteq\Herm(\calX)$, and if $\calC$ is a convex set then $I_{\calC}$
is a convex function.

\subsubsection*{Subdifferentials}

Let $f$ be a proper function of the form \eqref{eq:possibly-infinite-function}
and let $X\in\dom(f)$.
The \emph{subdifferential} of $f$ at $X$ is the set defined as
\begin{equation}
  \partial f(X) = \{Z\in\Herm(\calX) \,:\,
  f(Y)-f(X) \geq \langle Z,Y-X\rangle\;\text{for all}\;
  Y\in\dom(f)\}.
\end{equation}

A key property of the subdifferential of a proper function, which follows
trivially from the definition of the subdifferential, is its relation to global
minima:
$X\in\dom(f)$ is a global minimizer of $f$ if and only if $0\in\partial f(X)$.
Two additional properties of subdifferentials that are relevant to this paper
are the following:
\begin{enumerate}
  \item
    If $f$ is convex and differentiable at $X\in\dom(f)$ then
    $\partial f(X)=\{\nabla f(X)\}$. 
  \item
    If $\norm{\cdot}$ is any norm on $\Herm(\calX)$ and $f(X) = \norm{X}$
    for all $X\in\Herm(\calX)$, then 
    \begin{equation}
      \partial f(X) =
      \{Y :  \langle Y,X\rangle = \lVert X\rVert,\, \lVert Y\rVert_*\leq 1\},
    \end{equation}
    where $\norm{Y}_{\ast} = \sup\{\langle Y,X\rangle : \norm{X}\leq 1\}$ is
    the dual norm to $\norm{\cdot}$. (See Example 14.4.2 in \cite{Lange2013}.)
\end{enumerate}
We will make use of the following proposition, which presents a variant of
the \emph{chain rule for subdifferentials}.
In the statement of this proposition, $\relint$ denotes the
\emph{relative interior} of a set, and for a linear map
$\Lambda:\Herm(\calY)\rightarrow\Herm(\calX)$, the map
\begin{equation}
  \Lambda^{\ast}:\Herm(\calX)\rightarrow\Herm(\calY)
\end{equation}
denotes the adjoint map to $\Lambda$, which is the uniquely determined linear
map that satisfies
\begin{equation}
  \langle X, \Lambda(Y) \rangle = \langle \Lambda^{\ast}(X), Y \rangle
\end{equation}
for all $X\in\Herm(\calX)$ and $Y\in\Herm(\calY)$.
While the following result can be easily derived from basic rules of convex
analysis, we note that this particular result is stated and proved as
Theorem~7.2 in \cite{Mordukhovich2015}.

\begin{proposition}
  \label{prop:subdifferential-chain-rule}
  Let $\calX$ and $\calY$ be complex Euclidean spaces, let
  $f:\Herm(\calX)\rightarrow\RR\cup\{\infty\}$ be a convex function,
  and let $g:\Herm(\calY)\rightarrow\Herm(\calX)$ be an affine linear map,
  meaning that
  \begin{equation}
    g(Y) = \Lambda(Y) + A
  \end{equation}
  for all $Y\in\Herm(\calY)$, for some choice of a linear map
  $\Lambda:\Herm(\calY)\rightarrow\Herm(\calX)$ and an operator
  $A\in\Herm(\calX)$.
  \begin{enumerate}
  \item
    For every $Y\in\Herm(\calY)$, it is the case that
    \begin{equation}
      \partial (f\circ g)(Y) \supseteq \Lambda^{\ast}(\partial f
      (g(Y))).
    \end{equation}
  \item
    If $\im(\Lambda) \cap \relint(\dom(f)) \not= \varnothing$, then
    \begin{equation}
      \partial (f\circ g)(Y) = \Lambda^{\ast}(\partial f (g(Y)))
    \end{equation}
    for every $Y\in\Herm(\calY)$.
  \end{enumerate}
\end{proposition}

Lastly, we present the subdifferentials of some indicator functions.
The subdifferential of the indicator function of a set
$\mathcal{C}\subseteq\Herm(\calX)$ at an operator $X\in\Herm(\calX)$ may be
expressed as
\begin{equation}\label{eq:normalcone}
  \partial I_{\mathcal{C}}(X) = \begin{cases}
                                 \{Z\in\Herm(\calX) \,:\,
  \langle Z,X\rangle \geq \langle Z,Y\rangle\;\text{for all}\;
  Y\in\mathcal{C}\} & \text{if }X\in\mathcal{C}\\
  \varnothing & \text{otherwise.}
                                \end{cases}
\end{equation}
The set described in \eqref{eq:normalcone} is sometimes referred to as the
\emph{normal cone} of $\mathcal{C}$ at $X$.
(For further discussion of normal cones and subdifferentials of indicator
functions, see Section 2.2 in \cite{Mordukhovich2013} and Chapter 23 in
\cite{Rockafellar1970}.)
For a positive semidefinite operator $X\in\Pos(\calX)$ it is straightforward to
verify that
\begin{equation}\label{eq:subdiff_indicator_pos}
  \partial I_{\Pos(\calX)}(X)
  = \{Z\,:\, -Z\in\Pos(\calX),\, \langle Z,X\rangle=0\}
  = \{Z\,:\, -Z\in\Pos(\calX),\, ZX=0\}.
\end{equation}
Let $\Lambda:\Herm(\calX)\to\Herm(\calY)$ be a linear map, let
$A\in\Herm(\calY)$ be an operator, and define the set
\begin{equation}
 \mathcal{K}=\{X\in\Herm(\calX)\,:\, \Lambda(X)=A\}.
\end{equation}
(This is the pre-image of $A$ under $\Lambda$.)
For an operator $X\in\mathcal{K}$, it holds that  
\begin{equation}\label{eq:subdiff_indicator_preimage}
  \partial I_{\mathcal{K}}(X)= \{Z\in\Herm(\calX)\, :\,
  \langle Z,Y\rangle = 0 \;\text{for all}\;Y\in\ker(\Lambda)\} = \im(\Lambda^*).
\end{equation}
(See for example Proposition 2.12 in \cite{Mordukhovich2013}.)

\subsubsection*{Convex optimization problems}

Convex optimization problems in quantum information theory often have the
following general form, for some choice of a convex function
$f:\Herm(\calX)\rightarrow\RR\cup\{\infty\}$ and a convex set
$\calC\subseteq\Herm(\calX)$:
\begin{equation}
  \label{eq:cvxgeneral}
  \begin{aligned}
    &\text{minimize}   && f(X)\\
    &\text{subject to} && X\in\calC.
  \end{aligned}
\end{equation}
An operator $X\in\calC\cap\dom(f)$ is said to be \emph{optimal} for the
optimization problem \eqref{eq:cvxgeneral} if $f(X)\leq f(Y)$ for all
$Y\in\calC$.
As we will see below, convenient conditions for optimality can be given if it
is the case that
\begin{equation}
  \label{eq:relintdomfrelintC}
  \relint\bigl(\dom(f)\bigr) \cap \relint (\calC) \neq \varnothing.
\end{equation}

A (constrained) convex optimization problem of the form \eqref{eq:cvxgeneral}
can be considered as an unconstrained convex optimization problem by
minimizing $f(X) + I_{\calC}(X)$ over all Hermitian operators
$X\in\Herm(\calX)$.
The domain of the function $f + I_{\calC}$ is given by
\begin{equation}
  \dom(f+I_\calC) = \dom(f)\cap\calC.
\end{equation}
As was mentioned previously, an operator $X\in\dom(f)\cap\calC$ is a global
minimizer of $f(X) + I_{\calC}(X)$ if and only if
$0\in\partial(f+I_{\calC})(X)$.
If the condition in \eqref{eq:relintdomfrelintC} holds and both $f$ and $\calC$
are convex, a subdifferential sum-rule (see, e.g.\ Theorem 23.8 of
\cite{Rockafellar1970}) implies that
\begin{equation}
  \partial(f+I_{\calC})(X) = \partial f(X) + \partial I_{\calC}(X)
\end{equation}
for all $X\in\dom(f)\cap\calC$.
For this reason, a characterization of the subdifferential set
$\partial I_{\calC}(X)$ for a convex set $\calC$ can be useful in
identifying necessary and sufficient conditions for optimality.

\subsubsection*{Optimizing over Choi operators of channels}

For complex Euclidean spaces $\calX$ and $\calY$, the set of completely
positive, trace-preserving linear maps (i.e., channels) from $\L(\calX)$ to
$\L(\calY)$ is denoted $\C(\calX,\calY)$.
Given a linear map $\Phi:\L(\calX)\rightarrow\L(\calY)$, its Choi
representation $J(\Phi)\in\L(\calY\otimes\calX)$ is defined as
\begin{equation}
 J(\Phi) = \sum_{j,k=1}^{n} \Phi(E_{j,k})\otimes E_{j,k},
\end{equation}
assuming $\calX = \CC^n$.
(An analogous definition is used for index sets other than
$\{1,\ldots,n\}$.)
Through this representation, the set of channels is isomorphic to the set 
\begin{equation}
  \label{eq:Choichannels}
  J(\C(\calX,\calY)) = \{X\in\Pos(\calY\otimes\calX)\,:\,
  \Tr_{\calY}(X)=\I_{\calX}\}\subseteq\Herm(\calY\otimes\calX).
\end{equation}
This set is convex, and it is helpful to observe that its relative interior is
\begin{equation}
  \relint\bigl(J(\C(\calX,\calY))\bigr) =
  \{X\in\Pd(\calY\otimes\calX)\,:\,\Tr_{\calY}(X)=\I_{\calX}\}.
\end{equation}
The action of a map $\Phi$ can be recovered from its Choi representation
by the relation
\begin{equation}
  \Phi(X)=\Tr_{\calX}((\I_{\calY}\otimes X^\t)J(\Phi))
\end{equation}
for all operators $X\in\L(\calX)$, where $X^\t$ denotes the transpose of~$X$.

An optimization problem of the form
\begin{equation}
  \label{eq:cvxchannels-g}
  \begin{aligned}
    \text{minimize}\;\; & g(\Phi)\\
    \text{subject to}\;\; & \Phi\in\C(\calX,\calY),
  \end{aligned}
\end{equation}
where
\begin{equation}
  g:\C(\calX,\calY)\rightarrow\RR\cup\{\infty\}
\end{equation}
is a given function, can equivalently be expressed as
\begin{equation}
  \label{eq:cvxchannels-Choi}
  \begin{aligned}
    \text{minimize}\;\; & f(X)\\
    \text{subject to}\;\; & X\in J(\C(\calX,\calY)),
  \end{aligned}
\end{equation}
where
\begin{equation}
  f:J(\C(\calX,\calY))\rightarrow\RR\cup\{\infty\}
\end{equation}
is defined as $f(J(\Phi)) = g(\Phi)$ for all $\Phi\in\C(\calX,\calY)$.
Although the formulations \eqref{eq:cvxchannels-g} and
\eqref{eq:cvxchannels-Choi} are equivalent, it will be convenient for us to
focus primarily on the formulation \eqref{eq:cvxchannels-Choi}.
The results we obtain can, however, easily be adapted to the formulation
\eqref{eq:cvxchannels-g}.

\subsubsection*{Subdifferentials of the indicator function of the set of
  Choi operators of channels}

The following proposition, which follows from a straightforward application of
the rules of subdifferentials, provides a useful characterization of the
subdifferential of the indicator function of $J(\C(\calX,\calY))$ at every
point $X\in J(\C(\calX,\calY))$.
While we suspect the result is not new, we are not aware of a reference for it,
and have included its proof for completeness.

\begin{proposition}
  \label{prop:Choi-subdifferential}
  For complex Euclidean spaces $\calX$ and $\calY$, and for $X\in
  J(\C(\calX,\calY))$, it holds that
  \begin{equation}
    \label{eq:subdiffJCXY}
    \partial I_{J(\C(\calX,\calY))}(X) =
    \{-Y - \I_{\calY}\otimes Z \,:\,
    Z\in\Herm(\calX),\;
    Y\in\Pos(\calY\otimes\calX),\;
    YX = 0 \}.
  \end{equation}
\end{proposition}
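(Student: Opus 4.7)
My plan is to decompose the feasible set as $J(\C(\calX,\calY)) = \Pos(\calY\otimes\calX) \cap \calD$, where $\calD = \{X \in \Herm(\calY\otimes\calX) : \Tr_\calY(X) = \I_\calX\}$ is the affine subspace imposed by trace-preservation, so that $I_{J(\C(\calX,\calY))} = I_{\Pos(\calY\otimes\calX)} + I_\calD$. A standard subdifferential sum rule then reduces the problem to computing the two summands $\partial I_{\Pos(\calY\otimes\calX)}(X)$ and $\partial I_\calD(X)$ separately.

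For the sum rule to yield $\partial(I_{\Pos} + I_\calD)(X) = \partial I_{\Pos}(X) + \partial I_\calD(X)$, a constraint qualification is required: $\relint(\Pos(\calY\otimes\calX)) \cap \relint(\calD) \neq \emptyset$. Since $\relint(\Pos(\calY\otimes\calX)) = \Pd(\calY\otimes\calX)$ and $\calD$ coincides with its own relative interior, the totally mixed Choi operator $\I_{\calY\otimes\calX}/\dim(\calY)$ witnesses this condition.

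Each summand has a standard description. The subdifferential $\partial I_\calD(X)$ is the orthogonal complement of the tangent space $\ker(\Tr_\calY)\subseteq\Herm(\calY\otimes\calX)$, which equals $\im(\Tr_\calY^\ast) = \{\I_\calY\otimes Z : Z\in\Herm(\calX)\}$; this can also be derived via Theorem \ref{theorem:subdifferential-chain-rule} applied to $I_{\{\I_\calX\}}\circ\Tr_\calY$ (with the image of $\Tr_\calY$ hitting the singleton $\{\I_\calX\}$, so the hypothesis holds). The subdifferential $\partial I_{\Pos}(X)$ is the normal cone to the PSD cone at $X$, namely $\{-Y : Y\in\Pos(\calY\otimes\calX),\ YX=0\}$: if $W$ is in the normal cone, evaluating the defining inequality at $Y=0$ and $Y=2X$ forces $\langle W,X\rangle = 0$; then $\langle W,Y\rangle\leq 0$ for all $Y\in\Pos$ combined with self-duality of the PSD cone gives $-W\in\Pos$; and $\langle -W,X\rangle = 0$ with both PSD implies $(-W)X = 0$. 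Summing the two sets and absorbing the sign into the arbitrary Hermitian $Z$ yields the claimed identity.

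As a sanity check on the $\supseteq$ direction, one can verify it directly: for $W = -Y - \I_\calY\otimes Z$ as in the statement and any $X'\in J(\C(\calX,\calY))$, one computes $\langle W,X'-X\rangle = -\langle Y,X'\rangle + \langle Y,X\rangle - \langle Z,\Tr_\calY(X'-X)\rangle = -\langle Y,X'\rangle \leq 0$, using $\langle Y,X\rangle = \Tr(YX) = 0$ together with $\Tr_\calY(X') = \Tr_\calY(X) = \I_\calX$. This handles the easy inclusion. The main obstacle lies in justifying the sum-rule step for $\subseteq$, where the relative-interior condition above is essential; once that is in hand, everything else reduces to standard facts about normal cones to the PSD cone and to an affine subspace.
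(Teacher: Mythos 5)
Your proposal is correct and follows essentially the same route as the paper: both decompose $J(\C(\calX,\calY))$ as $\Pos(\calY\otimes\calX)$ intersected with the affine subspace $\{K:\Tr_{\calY}(K)=\I_{\calX}\}$, verify the relative-interior constraint qualification, and apply the subdifferential sum rule to combine the normal cone to the PSD cone with the orthogonal complement of the affine subspace's direction space. The only difference is that you spell out the standard normal-cone computations and the direct check of the easy inclusion, which the paper takes as known.
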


\begin{proof}
  Define
  $\mathcal{K}=\{K\in\Herm(\calY\otimes\calX)\,:\,\Tr_{\calY}(K)=\I_{\calX}\}$.
  It is the case that
  \begin{equation}
    J(\C(\calX,\calY)) = \Pos(\calY\otimes\calX) \cap \mathcal{K},
  \end{equation}
  and therefore
  \begin{equation}
    I_{J(\C(\calX,\calY))}
    = I_{\Pos(\calY\otimes\calX)} + I_{\mathcal{K}}.
  \end{equation}
  As $\Pos(\calY\otimes\calX)$ and $\mathcal{K}$ are both convex and
  \begin{equation}
    \relint(\Pos(\calY\otimes\calX)) \cap \relint(\mathcal{K})
    = \{P\in\Pd(\calY\otimes\calX)\,:\,\Tr_{\calY}(P) = \I_{\calX}\}
    \not= \varnothing,
  \end{equation}
  one has that
  \begin{equation}
    \partial I_{J(\C(\calX,\calY))}(X)
    = \partial I_{\Pos(\calY\otimes\calX)}(X) + \partial
    I_{\mathcal{K}}(X).
  \end{equation}
  Finally, from \eqref{eq:subdiff_indicator_pos} it holds that
  \begin{equation}
    \partial I_{\Pos(\calY\otimes\calX)}(X)
    = \{-Y\,:\,Y\in\Pos(\calY\otimes\calX),\; XY = 0\}
  \end{equation}
  and from \eqref{eq:subdiff_indicator_preimage} it holds that
  \begin{equation}
    \partial I_{\mathcal{K}}(X) = \{
    \I_{\calY}\otimes Z\,:\,Z\in\Herm(\calX)\}.
  \end{equation}
  This implies the proposition.  
\end{proof}

\subsubsection*{The Lagrange dual problem for channel optimization}

Consider a channel optimization problem expressed in the following form
that is equivalent to \eqref{eq:cvxchannels-Choi}:
\begin{equation}
  \label{eq:primal-problem}
  \begin{aligned}
    \text{minimize}\;\; & f(X)\\
    \text{subject to}\;\;
    & \Tr_{\calY}(X) = \I_{\calX},\\
    & X\in\Pos(\calY\otimes\calX).
  \end{aligned}
\end{equation}
One may then formulate the associated Lagrange dual problem:
\begin{equation}
  \label{eq:Lagrange-dual-problem}
  \begin{aligned}
    \text{maximize}\;\; & g(Y,Z)\\
    \text{subject to}\;\;
    & Y\in\Pos(\calY\otimes\calX),\\
    & Z \in \Herm(\calX),
  \end{aligned}
\end{equation}
where
\begin{equation}
  g(Y,Z) = \Tr(Z) + \inf_{X\in\Herm(\calY\otimes\calX)}
  \bigl( f(X) - \bigl\langle X,Y + \I_{\calY}\otimes Z\bigr\rangle \bigr)
\end{equation}
for all $Y\in\Pos(\calY\otimes\calX)$ and $Z\in\Herm(\calX)$.

The optimal value of the Lagrange dual problem \eqref{eq:Lagrange-dual-problem}
is a lower-bound for the optimal value of the original problem
\eqref{eq:primal-problem} (even if the function $f$ is not convex), which is a
property known as weak duality.
\emph{Slater's theorem} implies that if $f$ is convex, and there exists a
positive definite operator $X\in\Pd(\calY\otimes\calX)$ such that
$\Tr_{\calY}(X)=\I_{\calX}$ and $X\in\relint(\dom(f))$, then the problems
\eqref{eq:primal-problem} and \eqref{eq:Lagrange-dual-problem} have the same
optimal value---a property known as \emph{strong duality}.
(A proof of Slater's theorem can be found in Section 5.3.2 of \cite{Boyd2004}.)

\subsubsection*{Quantum measurements as channels}

Optimizations over the set of all measurements having a fixed number of
outcomes can be expressed as optimizations over channels, as we now explain.
Consider first a measurement on a complex Euclidean space $\calX$ that has $m$
possible measurement outcomes, and is described by measurement operators
$\{P_1,\ldots,P_m\}\subseteq\Pos(\calX)$.
Letting $\calY = \CC^m$, this measurement can be represented by the channel
$\Phi\in\C(\calX,\calY)$ defined as
\begin{equation}
  \label{eq:quantum-to-classical-channel}
  \Phi(X) = \sum_{k=1}^m \langle P_k , X \rangle \, E_{k,k}
\end{equation}
for all $X\in\L(\calX)$.
Any channel expressible in this form is called a
\emph{quantum-to-classical channel}.
An equivalent condition to a channel taking the form
\eqref{eq:quantum-to-classical-channel} is that its Choi operator takes the
form 
\begin{equation}
  \label{eq:quantum-to-classical-Choi}
  J(\Phi) = \sum_{k = 1}^m E_{k,k} \otimes P_k^{\t}.
\end{equation}

Next, for each $k\in\{1,\ldots,m\}$, define a linear map
$\Xi_k : \Herm(\calY\otimes\calX) \rightarrow \Herm(\calX)$ as
\begin{equation}
  \Xi_k(X) = \bigl( (e_k^{\ast} \otimes\I_{\calX}) X
  (e_k \otimes\I_{\calX}) \bigr)^{\t},
\end{equation}
and observe that one has
\begin{equation}
  \label{eq:measurement-from-Choi}
  P_k = \Xi_k(J(\Phi))
\end{equation}
for the quantum-to-classical channel $\Phi$ given by
\eqref{eq:quantum-to-classical-channel}.
In words, for a measurement represented by operators $\{P_1,\ldots,P_m\}$, the
linear map $\Xi_k$ allows for the recovery of the operator $P_k$ from the Choi
operator $J(\Phi)$ of the quantum-to-classical channel associated with that
measurement.
A function $g(P_1,\ldots,P_m)$ of these measurement operators can therefore
be expressed as a function
\begin{equation}
  \label{eq:g-from-f}
  f(J(\Phi)) = g\bigl(\Xi_1(J(\Phi)),\ldots,\Xi_m(J(\Phi))\bigr)
\end{equation}
of the Choi operator $J(\Phi)$.
Note that if $\Phi\in\C(\calX,\calY)$ is an arbitrary (i.e., not necessarily
quantum-to-classical) channel, then the operators $P_1,\ldots,P_m$ defined by
\eqref{eq:measurement-from-Choi} will still necessarily satisfy $P_1,\ldots,P_m
\in \Pos(\calX)$ and $P_1+\cdots+P_m = \I_{\calX}$, and therefore represent a
valid measurement.
(In general, the same measurement is given by a continuum of channels $\Phi$,
including the quantum-to-classical channel described earlier.)
An optimization problem of the form
\begin{equation}
  \label{eq:optimize-over-measurements}
  \begin{aligned}
    & \text{minimize}   && g(P_1,\ldots,P_m)\\[1mm]
    & \text{subject to} && P_1,\ldots,P_m \in \Pos(\calX)\\
    & && P_1+\cdots+P_m = \I_{\calX}
  \end{aligned}
\end{equation}
can therefore be expressed equivalently as follows:
\begin{equation}
  \begin{aligned}
    & \text{minimize}   && f(X)\\[1mm]
    & \text{subject to} && X\in J(\C(\calX,\calY))
  \end{aligned}
\end{equation}
for $f$ defined from $g$ as in \eqref{eq:g-from-f}.

\section{Optimality conditions for convex channel optimization}

In this section, we present our main general result regarding optimality
conditions for convex optimization problems over quantum channels.
As suggested in the previous section, it is convenient to associate quantum
channels with their Choi representations, and to consider optimization problems
of the form
\begin{equation}\label{eq:cvxchannelschoi}
  \begin{aligned}
    & \text{minimize}   && f(X)\\
    & \text{subject to} && X\in J(\C(\calX,\calY))
  \end{aligned}
\end{equation}
for various choices of convex functions
\begin{equation}
  f:\Herm(\calY\otimes\calX)\rightarrow\RR\cup\{\infty\}.
\end{equation}
Optimality conditions for such problems can be translated to optimality
conditions for problems of the form \eqref{eq:cvxchannels}, as will be
illustrated in the section following this one.

\begin{theorem}\label{thm:cvxchannelschoigeneral}
  Let $f:\Herm(\calY\otimes\calX)\rightarrow\RR\cup\{\infty\}$ be a convex
  function such that
  \begin{equation}
    \relint\bigl(\dom(f)\bigr)\cap
    \relint\bigl(J\bigl(\C(\calX,\calY)\bigr)\bigr) \not= \varnothing,
  \end{equation}
  and let $X\in J\bigl(\C(\calX,\calY)\bigr)$ be the Choi representation of a
  channel such that $X\in\dom(f)$.
  The following statements are equivalent:
  \begin{enumerate}
  \item
    The operator $X$ is optimal for the optimization problem
    \eqref{eq:cvxchannelschoi}.
  \item
    There exists an operator $H\in\partial f(X)$ satisfying
    \begin{equation}\label{eq:TrHXHIHX}
      \Tr_{\calY}(HX)\in\Herm(\calX)
      \quad \text{and}\quad
      H\geq \I_{\calY}\otimes\Tr_{\calY}(HX).
    \end{equation}
  \end{enumerate}
\end{theorem}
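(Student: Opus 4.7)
The plan is to reduce optimality of $X$ for the constrained problem \eqref{eq:cvxchannelschoi} to a subdifferential condition, and then unpack that condition using Proposition~\ref{prop:Choi-subdifferential}. Concretely, $X$ is optimal for \eqref{eq:cvxchannelschoi} if and only if $X$ is an unconstrained global minimizer of $f + I_{J(\C(\calX,\calY))}$, which is equivalent to $0\in\partial(f+I_{J(\C(\calX,\calY))})(X)$. The relative interior hypothesis validates the subdifferential sum rule (applied in the preceding background section), yielding
\begin{equation}
  \partial(f+I_{J(\C(\calX,\calY))})(X) = \partial f(X) + \partial I_{J(\C(\calX,\calY))}(X),
\end{equation}
so optimality amounts to the existence of $H\in\partial f(X)$ with $-H\in\partial I_{J(\C(\calX,\calY))}(X)$. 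By Proposition~\ref{prop:Choi-subdifferential}, this is the same as requiring
\begin{equation}
  H = Y + \I_{\calY}\otimes Z
\end{equation}
for some $Y\in\Pos(\calY\otimes\calX)$, $Z\in\Herm(\calX)$, with $YX=0$.

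For the implication (2)$\Rightarrow$(1), I would simply take $Z=\Tr_{\calY}(HX)$ and $Y=H-\I_{\calY}\otimes Z$. The assumption $H\geq\I_{\calY}\otimes\Tr_{\calY}(HX)$ gives $Y\geq 0$, and $Z\in\Herm(\calX)$ is given. The only real check is $YX=0$. For this I would use the standard fact that if $A,B\in\Pos(\calW)$ and $\langle A,B\rangle = 0$, then $AB=0$ (since $\Tr(AB)=\Tr(A^{1/2}BA^{1/2})=0$ and the argument is positive semidefinite). Here $X\in\Pos(\calY\otimes\calX)$ and $Y\in\Pos(\calY\otimes\calX)$, and a short computation using the identity $\Tr_{\calY}((\I_{\calY}\otimes Z)X) = Z\,\Tr_{\calY}(X) = Z$ (which uses $\Tr_{\calY}(X)=\I_{\calX}$) shows
\begin{equation}
  \Tr(YX) = \Tr(HX) - \Tr\bigl((\I_{\calY}\otimes Z)X\bigr) = \Tr(HX) - \Tr(Z) = 0,
\end{equation}
since $\Tr(Z)=\Tr(\Tr_{\calY}(HX))=\Tr(HX)$. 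So $YX=0$, as required.

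For (1)$\Rightarrow$(2), starting from $H=Y+\I_{\calY}\otimes Z$ with $Y\geq 0$ and $YX=0$, the same partial-trace identity gives $\Tr_{\calY}(HX) = \Tr_{\calY}(YX) + Z\,\Tr_{\calY}(X) = 0 + Z = Z$, so in particular $\Tr_{\calY}(HX) = Z\in\Herm(\calX)$ and $H - \I_{\calY}\otimes\Tr_{\calY}(HX) = Y\geq 0$, giving both conditions in \eqref{eq:TrHXHIHX}.

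The only mildly delicate step is the partial-trace computation $\Tr_{\calY}((\I_{\calY}\otimes Z)X) = Z\,\Tr_{\calY}(X)$; everything else follows mechanically from Proposition~\ref{prop:Choi-subdifferential} and the sum rule. I do not expect any genuine obstacle; the theorem is essentially a translation of Proposition~\ref{prop:Choi-subdifferential} into a single inequality via the normalization $\Tr_{\calY}(X)=\I_{\calX}$.
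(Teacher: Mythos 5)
Your proposal is correct and follows essentially the same route as the paper's proof: reduce optimality to $0\in\partial f(X)+\partial I_{J(\C(\calX,\calY))}(X)$ via the sum rule, unpack the indicator subdifferential with Proposition~\ref{prop:Choi-subdifferential}, and in the forward direction set $Z=\Tr_{\calY}(HX)$, $Y=H-\I_{\calY}\otimes Z$ and deduce $YX=0$ from $\langle Y,X\rangle=0$ with both operators positive semidefinite. No gaps.
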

  
\begin{proof}
  An operator $X$ is optimal for the problem \eqref{eq:cvxchannelschoi} if and
  only if
  \begin{equation}
    0\in\partial(f+I_{J(\C(\calX,\calY))})(X)
    = \partial f(X) + \partial I_{J(\C(\calX,\calY))}(X).
  \end{equation}
  By the characterization of the subdifferential of the indicator function
  $I_{J(\C(\calX,\calY))}$ given by
  Proposition~\ref{prop:Choi-subdifferential}, it follows that $X$ is optimal
  for the optimization problem \eqref{eq:cvxchannelschoi} if and only if there
  exist operators $Y\in\Pos(\calY\otimes\calX)$, $Z\in\Herm(\calX)$, and
  $H\in\partial f(X)$ satisfying
  \begin{equation}
    \label{eq:YX=0HYIX0}
    YX = 0
    \quad\text{and}\quad
    H-Y-\I_{\calY}\otimes Z =0.
  \end{equation}

  Assume first that statement 2 holds, and choose
  $Y\in\Pos(\calY\otimes\calX)$ and $Z\in\Herm(\calX)$ as
  \begin{equation}
    Y = H-\I_{\calY}\otimes Z
    \quad\text{and}\quad
    Z = \Tr_{\calY}(HX).
  \end{equation}
  As $X$ and $Y$ are both positive operators and 
  \begin{equation}
    \langle Y,X\rangle
    = \langle H - \I_{\calY}\otimes \Tr_{\calY}(HX), X\rangle
    = \langle H,X\rangle -\langle \Tr_{\calY}(HX),\I_{\calX}\rangle
    = 0,
  \end{equation}
  it follows that $YX=0$.
  The conditions in \eqref{eq:YX=0HYIX0} are therefore satisfied, implying that
  statement~1 holds.

  Now assume that statement~1 holds, so that there exist operators
  $Y\in\Pos(\calY\otimes\calX)$, $Z\in\Herm(\calX)$, and $H\in\partial f(X)$
  such that $YX = 0$ and $H-Y-\I_{\calY}\otimes Z =0$.
  It follows that $HX = (\I_{\calY}\otimes Z)X$ and therefore
  \begin{equation}
    \Tr_{\calY}(HX) = \Tr_{\calY}\bigl((\I_{\calY}\otimes Z)X\bigr)
    = Z\Tr_{\calY}(X) = Z,
  \end{equation}
  which is a Hermitian operator.
  Moreover, one has $H-\I_{\calY}\otimes\Tr_{\calY}(HX) = Y$, which is positive
  semidefinite by assumption.
  The conditions in \eqref{eq:TrHXHIHX} are therefore satisfied, which implies
  that statement~2 holds, completing the proof.
\end{proof}

To make use of Theorem \ref{thm:cvxchannelschoigeneral}, one requires that the
relative interior of the domain of the objective function contains at least one
point in the relative interior of the set of channels.
This requirement is precisely Slater's condition for this optimization problem,
which guarantees strong duality with the corresponding dual problem
\eqref{eq:Lagrange-dual-problem}.
This regularity condition is automatically satisfied if $f$ is continuous at at
least one point in the set of Choi representations of channels.
In particular, if $f$ is differentiable at $X$ then one may take the operator
$H$ in Theorem \ref{thm:cvxchannelschoigeneral} to be $H=\nabla f(X)$.

\begin{corollary}\label{cor:cvxchannels}
  Let $f:\Herm(\calY\otimes\calX)\rightarrow\RR\cup\{\infty\}$ be a convex
  function, let $X\in J(\C(\calX,\calY))$ be the Choi representation of a
  channel, and assume that $f$ is differentiable at $X$.
  The operator $X$ is an optimal solution to the optimization problem
  \eqref{eq:cvxchannelschoi} if and only if
  \begin{equation}
    \Tr_{\calY}(\nabla f(X)X)\in\Herm(\calX) \quad\text{and}\quad
    \nabla f(X)\geq \I_{\calY}\otimes\Tr_{\calY}(\nabla f(X)X).
  \end{equation}
\end{corollary}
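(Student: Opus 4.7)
The plan is to derive the corollary as a direct specialization of Theorem~\ref{thm:cvxchannelschoigeneral}: once the Slater-type regularity hypothesis is verified and the subdifferential of $f$ at $X$ is pinned down, the two conditions stated in the theorem become exactly the two conditions stated in the corollary.

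First I would verify the regularity hypothesis $\relint(\dom(f))\cap\relint(J(\C(\calX,\calY)))\neq\emptyset$. Since $f$ is differentiable at $X$, it is defined and (in fact) continuous on an open neighborhood $U\subseteq\Herm(\calY\otimes\calX)$ of $X$, so $U\subseteq\dom(f)$ and hence $X$ itself lies in the interior of $\dom(f)$ within $\Herm(\calY\otimes\calX)$, which is contained in $\relint(\dom(f))$. To produce a point that also lies in $\relint(J(\C(\calX,\calY)))$, I consider the convex combination $X_\lambda=(1-\lambda)X+\lambda X_0$, where $X_0=\frac{1}{\dim\calY}\I_{\calY}\otimes\I_{\calX}$ is the Choi operator of the completely depolarizing channel and hence lies in $\relint(J(\C(\calX,\calY)))$. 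For all sufficiently small $\lambda>0$, $X_\lambda$ stays inside the neighborhood $U$ (hence in $\dom(f)$) and is positive definite with $\Tr_{\calY}(X_\lambda)=\I_{\calX}$, placing it in both relative interiors.

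Next I would apply the subdifferential property recorded in the preliminaries: if $f$ is convex and differentiable at $X\in\dom(f)$, then $\partial f(X)=\{\nabla f(X)\}$. So the ``there exists $H\in\partial f(X)$'' clause in statement~2 of Theorem~\ref{thm:cvxchannelschoigeneral} collapses to the single choice $H=\nabla f(X)$. Substituting this into the two conditions $\Tr_{\calY}(HX)\in\Herm(\calX)$ and $H\geq\I_{\calY}\otimes\Tr_{\calY}(HX)$ produces exactly the conditions asserted in the corollary, and Theorem~\ref{thm:cvxchannelschoigeneral} then gives the desired equivalence with optimality of $X$.

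There is no serious obstacle here; the only mildly delicate step is the Slater verification, and even that is routine once one exploits differentiability to guarantee a full-dimensional neighborhood of $X$ in $\dom(f)$ and mixes in the depolarizing Choi operator to enter the relative interior of $J(\C(\calX,\calY))$.
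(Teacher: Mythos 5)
Your proposal is correct and follows essentially the same route as the paper: reduce to Theorem~\ref{thm:cvxchannelschoigeneral} by noting that $\partial f(X)=\{\nabla f(X)\}$ and that differentiability forces $f$ to be finite on a neighborhood of $X$, which yields the relative-interior condition. Your explicit mixing with the completely depolarizing channel's Choi operator simply spells out the Slater verification that the paper leaves terse.
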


\begin{proof}
  As $f$ is differentiable at $X$, one has that
  $\partial f(X)=\{\nabla f(X)\}$.
  Furthermore, $f$ must be finite in some neighborhood around $X$ and thus
  $\relint(\dom(f))\cap\relint(J(\C(\calX,\calY)))\neq\varnothing$ must hold.
  The result now follows from Theorem \ref{thm:cvxchannelschoigeneral}. 
\end{proof}

We remark that the optimality conditions represented by this corollary
appear to be a special feature of problems involving an optimization over
channels.
In essence, for differentiable convex quantum channel optimization problems,
every optimal primal solution~$X$ determines an optimal dual solution
consisting of operators
$Z=\Tr_{\calY}(\nabla f(X) X)$ and
$Y=\nabla f(X) - \I_{\calY}\otimes\Tr_{\calY}(\nabla f(X) X)$. 

It is natural to ask if there is an approximate version of the implication that
statement~2 implies statement 1 in Theorem~\ref{thm:cvxchannelschoigeneral}.
That is, if the requirements \eqref{eq:TrHXHIHX} hold approximately for some
$H\in\partial f(X)$, then is $X$ necessarily close to being optimal?
The following theorem demonstrates that this is indeed the case, which allows
for bounds to be placed on the optimal value of such optimization problems in
the case the conditions in~\eqref{eq:TrHXHIHX} cannot be verified exactly.

\begin{theorem}
\label{thm:approxoptimal}
  Let $f:\Herm(\calY\otimes\calX)\rightarrow\RR\cup\{\infty\}$ be a function,
  let $\Phi\in \C(\calX,\calY)$ be a channel such that $J(\Phi)\in\dom(f)$,
  and let $H \in \partial f(X)$.
  It is the case that
  \begin{equation}\label{eq:epsbound}
    f(J(\Phi)) \leq \inf_{\Psi\in\C(\calX,\calY)} f(J(\Psi))
    + \varepsilon\,\dim(\calX),
  \end{equation}
  where
  \begin{equation}\label{eq:eps}
    \varepsilon = \inf_{P\in\Pos(\calY\otimes\calX)}
    \bignorm{H - \I_{\calY}\otimes \Tr_{\calY}(H J(\Phi)) - P}_{\infty}.
  \end{equation}
\end{theorem}

\begin{proof}
  Because the spectral norm is a continuous function, a compactness argument
  implies that there must exist a positive semidefinite operator
  $P\in\Pos(\calY\otimes\calX)$ for which
  \begin{equation}
    \varepsilon = \bignorm{H-\I_{\calY}\otimes\Tr_{\calY}(H J(\Phi)) -
      P}_{\infty}.
  \end{equation}
  We will let such a $P$ be fixed for the remainder of the proof.
  
  Define a Hermitian operator
  \begin{equation}
    Z = \frac{1}{2} \Tr_{\calY}(H J(\Phi)) +
    \frac{1}{2} \Tr_{\calY}(H J(\Phi))^{\ast} - \varepsilon\,\I_{\calX},
  \end{equation}
  and define
  \begin{equation}
    Y = H - \I_{\calY} \otimes Z.
  \end{equation}
  Also define  
  \begin{equation}
    A = H - \I_{\calY}\otimes \Tr_{\calY}(H J(\Phi)) - P,
  \end{equation}
  so that $\norm{A}_{\infty} = \varepsilon$, and therefore
  \begin{equation}
    \biggnorm{\frac{1}{2}A + \frac{1}{2}A^{\ast}}_{\infty} \leq \varepsilon.
  \end{equation}
  It is the case that
  \begin{equation}
    Y = P + \frac{A + A^{\ast}}{2} + \varepsilon\,\I_{\calY}\otimes\I_{\calX}
    \geq P,
  \end{equation}
  and therefore $Y\in\Pos(\calY\otimes\calX)$.
  
  Now consider the Lagrange dual problem \eqref{eq:Lagrange-dual-problem}
  associated with the minimization of $f$ over all channels in
  $\C(\calX,\calY)$.
  As $Z$ is Hermitian and $Y$ is positive semidefinite, $(Y,Z)$ is a dual
  feasible solution to this dual problem.
  As $H \in\partial f(J(\Phi))$, one has that
  \begin{equation}
    f(X) - \langle H, X\rangle
    \geq f(J(\Phi)) - \langle H, J(\Phi)\rangle
  \end{equation}
  for every $X\in\Herm(\calY\otimes\calX)$.
  Therefore, when the dual objective function $g$ is evaluated at $(Y,Z)$,
  we find that
  \begin{equation}
    \begin{aligned}
      g(Y,Z) & = \Tr(Z) + \inf_{X\in\Herm(\calY\otimes\calX)}
      \bigl( f(X) - \bigl\langle X, Y + \I_{\calY}\otimes Z \bigr\rangle
      \bigr)\\[1mm]
      & = \langle H, J(\Phi)\rangle - \varepsilon\,\dim(\calX)
      + \inf_{X\in\Herm(\calY\otimes\calX)}
      \bigl( f(X) - \langle H,X\rangle \bigr)\\[1mm]
      & \geq f(J(\Phi)) - \varepsilon\,\dim(\calX).
    \end{aligned}
  \end{equation}
  The theorem follows by weak duality.  
\end{proof}

We note that the previous theorem does not require the function $f$ to be
convex, or for the associated optimization problem to satisfy the conditions
of Slater's theorem.
However, having knowledge of the subdifferential $\partial f(X)$ at an operator
$X$ requires knowledge of the global behavior of the function $f$ when $f$ is
not convex, so the usefulness of Theorem~\ref{thm:approxoptimal} may be limited
when $f$ is not convex.

\begin{remark}
  Theorem~\ref{thm:approxoptimal} implies that every feasible primal solution
  $X$ to the optimization problem in \eqref{eq:cvxchannels-Choi} for which
  $\partial f(X)$ is non-empty provides both an upper bound \emph{and} a lower
  bound to the optimal value. Indeed, from \eqref{eq:epsbound} one has that
  \begin{equation}
    f(X)-\varepsilon \leq \inf_{Y\in J(\C(\calX,\calY))}f(Y) \leq f(X)
  \end{equation}
  provided that  one can compute the value of $\varepsilon$ as defined in
  \eqref{eq:eps} for some $H\in \partial f(X)$.
\end{remark}

\section{Applications}

In this section we apply the optimality conditions given by
Theorem~\ref{thm:cvxchannelschoigeneral} to a few categories of examples,
including the simple case of channel optimization problems having linear
objective functions and three variants of problems involving quantum state
transformations.
In these examples we will make use of various facts concerning differentiation
for functions mapping between spaces of Hermitian operators; a short discussion
of this topic, along with a lemma that is needed for one of the examples, can
be found in Appendix \ref{appendix:gradients}.
Additionally, we remark that Theorem \ref{thm:cvxtransformationfids} can be
used to prove that the generalized fidelity of recovery is multiplicative.

\subsection{Linear objective functions}

We will begin by considering the simple case in which the objective function
$f$ in Theorem~\ref{thm:cvxchannelschoigeneral} is linear.
In this situation, the optimization problem \eqref{eq:cvxchannelschoi}
may be rewritten as
\begin{equation}
  \label{eq:cvxstatediscriminationchannels}
  \begin{aligned}
    &\text{minimize}  &&\langle H,X\rangle\\
    &\text{subject to}&& X\in J(\C(\calX,\calY))
  \end{aligned}
\end{equation}
for some choice of a Hermitian operator $H\in\Herm(\calY\otimes\calX)$.
The subdifferential of the function $f(X) = \langle H,X\rangle$ is given by
$\partial f(X) = \{H\}$ for all $X\in\Herm(\calY\otimes\calX)$.
By Theorem~\ref{thm:cvxchannelschoigeneral} it follows that the Choi operator
$X = J(\Phi)$ of a channel $\Phi\in\C(\calX,\calY)$ is optimal for the
problem \eqref{eq:cvxstatediscriminationchannels} if and only if
\begin{equation}
  \label{eq:linear-criterion}
  \Tr_{\calY}(HX)\in\Herm(\calX)
  \quad\text{and}\quad
  H \geq \I_{\calY}\otimes\Tr_{\calY}(HX).
\end{equation}
We observe that this optimality criterion can alternatively be obtained through
semidefinite programming duality and complementary slackness.
(See, for instance, Exercise 3.5 of \cite{watrous2018}, observing that the
inequality is reversed in that exercise because the optimization problem is
expressed as a maximization rather than a minimization.)

The problem of \emph{minimum error state discrimination}, which was mentioned
in the introduction, is a special case in which the function $f$ in the
optimization problem \eqref{eq:cvxchannelschoi} is linear.
Consider an ensemble of states, which represents the random selection of one of
a finite number of quantum states according to a given probability
distribution.
Formally speaking, an ensemble is described by a collection
$\{\rho_1,\ldots,\rho_n\}\subseteq\D(\calX)$ of density operators together with
a probability vector $p = (p_1,\ldots,p_n)$.
The problem of minimum-error state discrimination seeks a measurement on
the system represented by the space $\calX$ that identifies, with the minimum
possible probability of error, a state chosen randomly according to this
ensemble.

For a given choice of a measurement, represented by operators
$\{P_1,\ldots,P_n\}\subseteq\Pos(\calX)$, the error probability incurred by
this measurement can be expressed as
\begin{equation}
  \label{eq:state-discrimination-error-probabiity}
  \sum_{k = 1}^n \bigl\langle \rho - p_k \rho_k , P_k\bigr\rangle,
\end{equation}
where
\begin{equation}
  \rho = \sum_{k=1}^n p_k \rho_k
\end{equation}
is the average state of the ensemble.
A minimization of the error probability
\eqref{eq:state-discrimination-error-probabiity} over all measurements
$\{P_1,\ldots,P_n\}$ can be represented as an optimization of the form
\eqref{eq:cvxstatediscriminationchannels} by letting $\mathcal{Y}=\CC^n$ and
setting
\begin{equation}
  H = \sum_{k=1}^n E_{k,k}\otimes (\rho - p_k \rho_k)^{\t}.
\end{equation}
The measurement described by $\{P_1,\ldots,P_n\}$, which may alternatively be
represented by a quantum-to-classical channel whose Choi operator is defined as
\begin{equation}
  X = \sum_{k = 1}^n E_{k,k}\otimes P_k^{\t},
\end{equation}
is optimal for the minimization of the error probability 
\eqref{eq:state-discrimination-error-probabiity} if and only of the conditions
\eqref{eq:linear-criterion} hold.
These conditions can be simplified by first calculating that
\begin{equation}
  \Tr_{\calY} (HX) = \sum_{k = 1}^n (\rho - p_k \rho_k)^{\t} P_k^{\t}
  = \rho^{\t} - \sum_{k = 1}^n p_k \rho_k^{\t} P_k^{\t},
\end{equation}
then observing that this operator is Hermitian if and only if the operator
\begin{equation}
  \label{eq:HYKL-1}
  \sum_{k = 1}^n p_k P_k \rho_k
\end{equation}
is Hermitian, and finally noting that the inequality
$H \geq \I_{\calY}\otimes\Tr_{\calY}(HX)$ is equivalent to
\begin{equation}
  \sum_{j = 1}^n E_{j,j} \otimes p_j \rho_j \leq \I_{\calY}\otimes
  \sum_{k = 1}^n p_k P_k \rho_k,
\end{equation}
which may alternatively be expressed as
\begin{equation}
  \label{eq:HYKL-2}
  \sum_{k = 1}^n p_k P_k \rho_k \geq p_j \rho_j
  \quad\text{(for all $j=1,\ldots,n$)}.
\end{equation}
The conditions that the operator \eqref{eq:HYKL-1} is Hermitian and satisfies
the inequalities \eqref{eq:HYKL-2} comprise the Holevo--Yuen--Kennedy--Lax
conditions for measurement optimality
\cite{Holevo73-Kyoto,Holevo73-statistical,YuenKL70,YuenKL75}.

\subsection{State transformation}

The second category of channel optimization problems we consider involves
channels that transform one state into another in such a way that the distance
to a target state is minimized.
One may consider any number of specific measures of distance in such a problem;
we will analyze measures based on the fidelity, trace distance, and quantum
relative entropy.
For each of these measures, we will consider the situation in which two
bipartite states, $\rho\in\D(\calX\otimes\calZ)$ and
$\sigma\in\D(\calY\otimes\calZ)$, for complex Euclidean spaces $\calX$,
$\calY$, and $\calZ$, are given.
The optimization problem to be considered is to minimize the distance (or
maximize the similarity) between the states
$\bigl(\Phi\otimes\I_{\L(\calZ)}\bigr)(\rho)$ and $\sigma$, with respect to the
measure under consideration, over all possible channels
$\Phi\in\C(\calX,\calY)$.

Optimization problems of this sort were suggested in 
\cite{SeshadreesanW15} (see Remark~6) in the special case that
$\calY = \calX\otimes\calW$ and $\rho = \Tr_{\calW}(\sigma)$, so that the
aim of the channel $\Phi$ is to ``recover'' the original state $\sigma$ from
the portion of it represented by $\rho$.
The resulting quantities are known as the \emph{fidelity of recovery},
\emph{relative entropy of recovery}, and so on.
These and related measures have been studied in a number of recent papers
(such as \cite{FawziR15,BrandaoHOS15,CooneyHMOSWW16,BertaT16,BertaFT17}), and
the more general situation in which $\rho$ and $\sigma$ are arbitrary states
arises naturally in this study.
(For example, \cite{BertaT16} uses the term
\emph{generalized fidelity of recovery} in the situation mentioned above in the
case of the fidelity.) 

As a further application of the methods introduced in this paper,
it is straightforward to show (by applying the result of Theorem
\ref{thm:cvxtransformationfids}) that the generalized fidelity of recovery is
multiplicative. This fact has already been shown in \cite{BertaT16}, where it
was shown by examining the fidelity of recovery as a semidefinite program.
A detailed proof using Theorem~\ref{thm:cvxtransformationfids} is presented in
Appendix \ref{app:multiplicativity}.

When analyzing optimality conditions for these problems, it will be helpful to
refer to the \emph{evaluation map} corresponding to the operator
$\rho\in\D(\calX\otimes\calZ)$.
This is the uniquely determined completely positive map
$\Psi_{\rho} \in \CP(\calX,\calZ)$ that satisfies
\begin{equation}
  \bigl( \I_{\L(\calY)}\otimes\Psi_{\rho} \bigr)(J(\Phi))
  = \bigl( \Phi \otimes \I_{\L(\calZ)} \bigr)(\rho)
\end{equation}
for every complex Euclidean space $\calY$ and every channel
$\Phi\in\C(\calX,\calY)$.
The relationship between the map $\Psi_{\rho}$ and the state $\rho$ is closely
related to the Choi representation of maps: assuming for the moment that
$\calX = \CC^n$, one has that
\begin{equation}
  \rho = \sum_{j,k=1}^n E_{j,k} \otimes \Psi_\rho(E_{j,k}).
\end{equation}
That is, up to swapping the tensor factors corresponding to $\calX$ and
$\calZ$, the state $\rho$ is the Choi operator of the map $\Psi_{\rho}$.

\subsubsection*{Objective functions based on the fidelity}

For positive semidefinite operators $P,Q\in\Pos(\calX)$, one defines the
fidelity between $P$ and $Q$ as
\begin{equation}
  \F(P,Q) = \Bignorm{\sqrt{P}\sqrt{Q}}_1 = \Tr \sqrt{\sqrt{P}Q\sqrt{P}}.
\end{equation}
The first variant of the optimal state transformation problem we will consider
is as follows:
\begin{equation}
  \label{eq:cvxtransformationpartialaccess}
  \begin{aligned}
    &\text{minimize}  &&
    -\F\bigl(\sigma,\bigl(\Phi\otimes\I_{\L(\calZ)}\bigr)(\rho)\bigr)\\
    & \text{subject to} && \Phi\in\C(\calX,\calY).
  \end{aligned}
\end{equation}
The fidelity function is jointly concave (see for example Corollary 3.26 in
\cite{watrous2018}), and therefore is concave in each of its arguments, from
which it follows that this problem is a convex optimization problem.
It is possible to express this optimization problem as a semidefinite program,
as is demonstrated in \cite{BertaFT17}.

The following theorem establishes optimality conditions for a channel
$\Phi\in\C(\calX,\calY)$ in the optimization problem 
\eqref{eq:cvxtransformationpartialaccess} under the assumption that the
operator $\Tr_{\calX}(\rho)$ is positive definite.
We note that the theorem statement does not actually require $\rho$ and
$\sigma$ to have unit trace---they can be arbitrary positive semidefinite
operators, but we nevertheless use the letters $\rho$ and $\sigma$ to make the
connection to the optimization problem
\eqref{eq:cvxtransformationpartialaccess} clear.

\begin{theorem}
  \label{thm:cvxtransformationfids}
  Let $\rho\in\Pos(\calX\otimes\calZ)$ and $\sigma\in\Pos(\calY\otimes\calZ)$
  be positive semidefinite operators, for complex Euclidean spaces $\calX$,
  $\calY$, and $\calZ$, and assume that $\Tr_{\calX}(\rho)$ is a positive
  definite operator.
  A channel $\Phi\in\C(\calX,\calY)$ is optimal for the optimization problem 
  \eqref{eq:cvxtransformationpartialaccess} if and only if the following two
  conditions are met:
  \begin{enumerate}
  \item[1.]
    $\im(\sigma) \subseteq
    \im\bigl(\bigl(\Phi\otimes\I_{\L(\calZ)}\bigr)(\rho) \bigr)$.
  \item[2.]
    The operator
    \begin{equation}
      \label{eq:fidelity-H}
      H =-\frac{1}{2}(\I_{\L(\calY)}\otimes \Psi_{\rho}^{\ast})
      \Bigl(
      \sqrt{\sigma}\tinyspace
      \bigl( \sqrt{\sigma}\, (\Phi\otimes\I_{\L(\calZ)})(\rho)
      \sqrt{\sigma}\bigr)^{-\frac{1}{2}}
      \sqrt{\sigma}
      \Bigr)
    \end{equation}
    satisfies
    \begin{equation}
      \Tr_{\calY}(H J(\Phi)) \in \Herm(\calX)
      \quad\text{and}\quad
      H \geq \I_{\calY}\otimes\Tr_{\calY}(H J(\Phi)).
    \end{equation}
    (As per the convention mentioned in Section~\ref{sec:preliminaries},
    the inverse in \eqref{eq:fidelity-H} refers to the Moore--Penrose
    pseudo-inverse in case $\sigma$ does not have full rank.)
  \end{enumerate}
\end{theorem}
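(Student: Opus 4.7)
The strategy is to apply Theorem~\ref{thm:cvxchannelschoigeneral} to the composition $f = g\circ\Lambda$, where $g:\Herm(\calY\otimes\calZ)\to\RR\cup\{\infty\}$ is defined by $g(Y)=-\F(\sigma,Y)$ for $Y\in\Pos(\calY\otimes\calZ)$ and $g(Y)=\infty$ otherwise, and $\Lambda(X) = (\I_{\L(\calY)}\otimes\Psi_\rho)(X)$ is affine linear. Joint concavity of $\F$ makes $g$ convex. The hypothesis $\Tr_\calX(\rho)=\Psi_\rho(\I_\calX)\in\Pd(\calZ)$ gives $\Lambda(\I_\calY\otimes\I_\calX)=\I_\calY\otimes\Tr_\calX(\rho)\in\Pd(\calY\otimes\calZ)=\relint(\dom g)$, so Theorem~\ref{theorem:subdifferential-chain-rule} yields $\partial f(X)=\Lambda^{\ast}(\partial g(\Lambda(X)))$ for every $X$, with $\Lambda^{\ast}=\I_{\L(\calY)}\otimes\Psi_\rho^{\ast}$.

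The crux is an explicit description of $\partial g(Y_0)$ at $Y_0=(\Phi\otimes\I_{\L(\calZ)})(\rho)$. Writing $g = g_0 + I_{\Pos(\calY\otimes\calZ)}$ with $g_0 = -\F(\sigma,\cdot)$, and recalling that $\partial I_{\Pos}(Y_0)=\{-W:W\in\Pos(\calY\otimes\calZ),\,WY_0=0\}$ is the normal cone at $Y_0$, I plan to establish two complementary claims. First, when $\im(\sigma)\not\subseteq\im(Y_0)$, the operator $V=\Pi_{\ker Y_0}\sigma\Pi_{\ker Y_0}$ is a non-zero PSD direction for which $\sqrt{\sigma}V\sqrt{\sigma}=(\sqrt{\sigma}\Pi_{\ker Y_0}\sqrt{\sigma})^{2}$; using operator monotonicity of the square root, one deduces $\F(\sigma,Y_0+tV)-\F(\sigma,Y_0)=\Theta(\sqrt{t})$, which precludes any affine lower bound on $g$ at $Y_0$ in that direction, so $\partial g(Y_0)=\emptyset$. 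Second, when $\im(\sigma)\subseteq\im(Y_0)$, the standard identity $\tfrac{d}{dt}\Tr\sqrt{M+tN}\vert_{t=0}=\tfrac{1}{2}\Tr(M^{-1/2}N)$ (valid for $M$ positive on the support of $N$) shows $g_0$ is Fr\'echet differentiable at $Y_0$ with gradient $H_0=-\tfrac{1}{2}\sqrt{\sigma}(\sqrt{\sigma}Y_0\sqrt{\sigma})^{-1/2}\sqrt{\sigma}$, and the subdifferential sum rule then gives
\[
\partial g(Y_0) = \{H_0 - W : W\in\Pos(\calY\otimes\calZ),\,WY_0=0\}.
\]
In particular, $\partial g(Y_0)\neq\emptyset$ if and only if condition~1 holds.

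Sufficiency of conditions~1 and~2 is then immediate: the second claim gives $H_0\in\partial g(Y_0)$, hence $H=\Lambda^{\ast}(H_0)\in\partial f(J(\Phi))$, and condition~2 is exactly what Theorem~\ref{thm:cvxchannelschoigeneral} demands of $H$, so $\Phi$ is optimal. For necessity, suppose $\Phi$ is optimal; Theorem~\ref{thm:cvxchannelschoigeneral} supplies some $\bar H\in\partial f(J(\Phi))$ meeting its two requirements, and non-emptiness of $\partial f(J(\Phi))$ together with the first claim forces condition~1. Writing $\bar H = H - V$ with $V = \Lambda^{\ast}(W)$ for some $W\in\Pos(\calY\otimes\calZ)$ satisfying $WY_0=0$, complete positivity of $\Psi_\rho^{\ast}$ gives $V\in\Pos(\calY\otimes\calX)$, and $\langle V,J(\Phi)\rangle=\langle W,Y_0\rangle=0$ together with $V,J(\Phi)\geq 0$ gives $VJ(\Phi)=0$. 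Hence $\Tr_\calY(\bar HJ(\Phi)) = \Tr_\calY(HJ(\Phi))$, transferring the Hermiticity requirement from $\bar H$ to $H$, and the inequality $\bar H\geq\I_\calY\otimes\Tr_\calY(\bar HJ(\Phi))$ combined with $V\geq 0$ yields $H=\bar H+V\geq\I_\calY\otimes\Tr_\calY(HJ(\Phi))$, establishing condition~2.

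The main technical obstacle I expect is justifying the second claim in the rank-deficient case: rigorously verifying Fr\'echet differentiability of $g_0$ at $Y_0$ when $Y_0$ is not positive definite, and checking the hypotheses of the subdifferential sum rule so that $\partial g(Y_0) = H_0 + N_{\Pos}(Y_0)$ can be read off cleanly.
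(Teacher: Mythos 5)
Your overall architecture is the same as the paper's: pass the objective through the affine map $\Lambda = \I_{\L(\calY)}\otimes\Psi_\rho$, use surjectivity of $\Lambda$ (from $\Tr_{\calX}(\rho)\in\Pd(\calZ)$) to apply the chain rule for subdifferentials, compute $\partial g$ at $Y_0=(\Phi\otimes\I_{\L(\calZ)})(\rho)$, and feed the result into Theorem~\ref{thm:cvxchannelschoigeneral}. The one genuine structural difference is in how the degenerate cases are handled. The paper defines $g(Y)=-\F(A^{\ast}\sigma A, A^{\ast}YA)$ for an isometry $A$ onto $\im(\sigma)$, which enlarges $\dom(g)$ beyond $\Pos(\calY\otimes\calZ)$ to an open neighborhood of every $Y_0$ satisfying condition~1; consequently $\partial g(Y_0)$ is a singleton and the $H$ produced by Theorem~\ref{thm:cvxchannelschoigeneral} is forced to be the canonical one. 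You instead keep $\dom(g)=\Pos(\calY\otimes\calZ)$, obtain $\partial g(Y_0)=\{H_0-W: W\geq 0,\ WY_0=0\}$, and then absorb the normal-cone part by the complementary-slackness argument ($V=\Lambda^{\ast}(W)\geq 0$, $VJ(\Phi)=0$, hence the two conditions transfer from $\bar H$ to $H$). That absorption argument is correct and is a legitimate alternative to the paper's route, at the cost of needing the exact sum-rule identity for $\partial g(Y_0)$ rather than just differentiability.

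There are, however, two places where the proposal as written has real gaps. First, the emptiness claim when $\im(\sigma)\not\subseteq\im(Y_0)$: your direction $V=\Pi_{\ker Y_0}\sigma\Pi_{\ker Y_0}$ gives $\sqrt{\sigma}(Y_0+tV)\sqrt{\sigma}=M+tB^2$ with $M=\sqrt{\sigma}Y_0\sqrt{\sigma}$ and $B=\sqrt{\sigma}\Pi_{\ker Y_0}\sqrt{\sigma}$, but $M$ and $B^2$ do not in general have orthogonal supports ($MB=\sqrt{\sigma}Y_0\sigma\Pi_{\ker Y_0}\sqrt{\sigma}$ need not vanish), so operator monotonicity/concavity of the square root only yields an $O(t)$ improvement of $\Tr\sqrt{M+tB^2}$ over $\Tr\sqrt{M}$, which does not contradict the subgradient inequality. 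The $\Theta(\sqrt{t})$ growth is true, but proving it with your direction requires degenerate eigenvalue perturbation theory; the paper's Lemma~\ref{lemma:fidelity-derivative} instead perturbs by $\lambda P^{-1/2}\Delta P^{-1/2}$ with $\Delta$ the projection onto $\ker(\sqrt{P}X\sqrt{P})$, so that the perturbation of $\sqrt{P}X\sqrt{P}$ is exactly $\lambda\Delta$, orthogonal to its support, and the identity $\Tr\sqrt{\sqrt{P}X\sqrt{P}+\lambda\Delta}=\Tr\sqrt{\sqrt{P}X\sqrt{P}}+\sqrt{\lambda}\,\Tr(\Delta)$ is exact. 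Second, the Fr\'echet differentiability of $g_0=-\F(\sigma,\cdot)$ at a rank-deficient $Y_0$ with $\im(\sigma)\subseteq\im(Y_0)$, which you correctly flag as the main obstacle, is not something you can avoid: to make sense of $g_0$ on a full neighborhood of $Y_0$ you essentially must compress to $\im(\sigma)$, i.e., reintroduce the paper's isometry $A$ and observe that $A^{\ast}YA$ stays positive definite near $Y_0$. Once that device is in place, the normal-cone decomposition becomes unnecessary and the paper's shorter singleton argument is available. So the proposal is on the right track and its novel step (the absorption argument) is sound, but both technical claims about $\partial g(Y_0)$ need the compression-by-isometry machinery (and, for claim~1, a corrected perturbation direction) to be complete.
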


\begin{remark}
  \label{remark:fidelity-transformation}
  The theorem assumes that $\Tr_{\calX}(\rho)$ has full rank, as this
  assumption allows for a cleaner theorem statement.
  It is, however, straightforward to apply the theorem to a situation in which
  $\Tr_{\calX}(\rho)$ does not have full rank.
  Specifically, for an arbitrary choice of $\rho$ and $\sigma$, one may take
  $B\in\L(\calV,\calZ)$ to be an isometry for which $B B^{\ast}$ is the
  projection onto the image of $\Tr_{\calX}(\rho)$, and then observe that by
  replacing $\rho$ and $\sigma$ with
  $(\I_{\calX}\otimes B^{\ast})\rho(\I_{\calX}\otimes B)$ and
  $(\I_{\calY}\otimes B^{\ast})\sigma(\I_{\calY}\otimes B)$, respectively,
  an equivalent problem is obtained that satisfies the assumptions of the
  theorem.
\end{remark}

\begin{remark}\label{rem:geomean}
The argument of the expression in \eqref{eq:fidelity-H} has the form of an
\emph{operator geometric mean}, which is defined as follows. For positive
definite operators $P,Q\in\Pd(\calX)$ on a complex Euclidean space $\calX$,
their geometric mean is the operator $P\,\#\,Q\in\Pd(\calX)$ defined as
\[
P\,\#\,Q = \sqrt{P}
\Bigl( \sqrt{P^{-1}} Q \sqrt{P^{-1}} \Bigr)^{\frac{1}{2}}\sqrt{P}.
\]
If one takes $A\in\L(\calW,\calY\otimes\calZ)$ to be an isometry for which $A
A^{\ast} = \Pi_{\im(\sigma)}$, the operator $H$ in~\eqref{eq:fidelity-H} may be
expressed as
\begin{equation}
  H =-\frac{1}{2}(\I_{\L(\calY)}\otimes \Psi_{\rho}^{\ast})
  \Bigl(A\Bigl(
  (A^*\sigma A)\,\#\,(A^*YA)^{-1}
  \Bigr)A^*\Bigr)
\end{equation}
where 
$Y=(\Phi\otimes\I_{\L(\calZ)})(\rho)$.
If the operator $\sigma$ is positive definite (in which case
$A=\I_{\calY\otimes\calZ}$), this expression simplifies to
\begin{equation}
  H = -\frac{1}{2}(\I_{\L(\calY)}\otimes
  \Psi_{\rho}^{\ast})\bigl(\sigma\,\#\,Y^{-1}\bigr).
\end{equation}
For further discussion of the operator geometric mean see Section 4.1 in
\cite{Bhatia2007}.
\end{remark}

\begin{proof}[Proof of Theorem~\ref{thm:cvxtransformationfids}]
  Let $r = \op{rank}(\sigma)$, let $\calW = \complex^r$, and let
  $A\in\L(\calW,\calY\otimes\calZ)$ be any isometry for which
  $A A^{\ast} = \Pi_{\im(\sigma)}$ (the projection onto the image of $\sigma$).
  Define a function
  \begin{equation}
    g:\Herm(\calY\otimes\calZ)\rightarrow\RR\cup\{\infty\}
  \end{equation}
  as
  \begin{equation}
    g(Y) = \begin{cases}
      -\F(A^{\ast}\sigma A, A^{\ast} Y A ) &
      \text{if}\;A^{\ast} Y A \in\Pos(\calW)\\
      \infty & \text{otherwise},
    \end{cases}
  \end{equation}
  for all $Y\in\Herm(\calY\otimes\calZ)$, and observe that
  $g(Y) = -\F(\sigma,Y)$ for every $Y\in\Pos(\calY\otimes\calZ)$.
  
  For a given operator $Y\in\Herm(\calY\otimes\calZ)$ satisfying
  $A^{\ast} Y A \in\Pos(\calW)$, there are two cases for the subdifferential
  $\partial g(Y)$.
  \begin{trivlist}
  \item Case 1: $A^{\ast} Y A$ is positive definite.
    In this case $g$ is differentiable at $Y$, and
    \begin{equation}
      \nabla g(Y) = -\frac{1}{2} A\left((A^*\sigma A)\,\#\, (A^*YA)^{-1}\right)A^*
    \end{equation}
    which follows from Lemma~\ref{lemma:fidelity-derivative} (as stated and proved
    in Appendix \ref{appendix:gradients}).
    It therefore follows that
    \begin{equation}
      \partial g(Y) = \biggl\{-\frac{1}{2} A\left((A^*\sigma A)\,\#\, (A^*YA)^{-1}\right)A^*
      \biggr\}.
    \end{equation}
  \item
    Case 2: $A^{\ast} Y A$ is not positive definite.
    In this case, $\partial g(Y) = \varnothing$, which also follows from
    Lemma~\ref{lemma:fidelity-derivative}.
  \end{trivlist}

  Next, define
  \begin{equation}
    \Lambda = \I_{\L(\calY)} \otimes \Psi_{\rho},
  \end{equation}
  and observe that
  \begin{equation}
    (g\circ\Lambda)(J(\Phi)) = 
    -\F\bigl(\sigma,\bigl(\Phi\otimes\I_{\L(\calZ)}\bigr)(\rho)\bigr)
  \end{equation}
  for every channel $\Phi\in\C(\calX,\calY)$.
  It is the case that
  $\Lambda(\I_{\calY}\otimes\I_{\calX})= \I_{\calY}\otimes\Tr_{\calX}(\rho)$,
  which is positive definite by assumption.
  As $\Pd(\calY\otimes\calZ)\subset\relint(\dom(g))$, it follows that
  \begin{equation}
   \im(\Lambda) \cap \relint(\dom(g)) \neq \varnothing,
  \end{equation}
  and therefore, by Proposition~\ref{prop:subdifferential-chain-rule},
  \begin{equation}
    \partial (g\circ\Lambda)(X) = \Lambda^{\ast}(\partial g(\Lambda(X)))
  \end{equation}
  for every $X\in\Herm(\calY\otimes\calX)$.

  The theorem now follows from Theorem~\ref{thm:cvxchannelschoigeneral}.
  In greater detail, if $\Phi$ is optimal for the problem 
  \eqref{eq:cvxtransformationpartialaccess}, there must
  exist an operator $H\in\partial (g\circ\Lambda)(J(\Phi))$ such that
  \begin{equation}
    \Tr_{\calY}(H J(\Phi))\in\Herm(\calX)
    \quad\text{and}\quad
    H \geq \I_{\calY}\otimes\Tr_{\calY}(H J(\Phi)).
  \end{equation}
  Case 1 described above must therefore hold when $Y = \Lambda(J(\Phi))$, for
  otherwise the subdifferential $\partial (g\circ\Lambda)(J(\Phi))$ would be
  empty.
  It follows that
  \begin{equation}
    H =-\frac{1}{2}(\I_{\L(\calY)}\otimes \Psi_{\rho}^{\ast})
    \bigl(A\bigl(
    (A^*\sigma A)\,\#\,(A^*(\Phi\otimes\I_{\L(\calZ)})(\rho)A)^{-1}
    \bigr)A^*\bigr)
  \end{equation}
  and straightforward manipulation reveals that 
  \begin{equation}
    H = -\frac{1}{2}(\I_{\L(\calY)}\otimes \Psi_{\rho}^{\ast})
    \Bigl(
    \sqrt{\sigma}\tinyspace
    \bigl( \sqrt{\sigma}\, (\Phi\otimes\I_{\L(\calZ)})(\rho)
    \sqrt{\sigma}\bigr)^{-\frac{1}{2}}
    \sqrt{\sigma}
    \Bigr)
  \end{equation}
  (where, as always, we interpret the inverse as the Moore--Penrose
  pseudo-inverse in the case when the given operators are not positive
  definite).
  The second condition in the statement of the theorem now follows.

  Conversely, if the two conditions in the statement of the theorem hold, it
  follows that $H\in\partial (g\circ\Lambda)(J(\Phi))$, and moreover that
  \begin{equation}
    \Tr_{\calY}(H J(\Phi))\in\Herm(\calX)
    \quad\text{and}\quad
    H \geq \I_{\calY}\otimes\Tr_{\calY}(H J(\Phi)).
  \end{equation}
  One concludes by Theorem~\ref{thm:cvxchannelschoigeneral} that
  $\Phi$ is optimal for the optimization problem 
  \eqref{eq:cvxtransformationpartialaccess}.  
\end{proof}

An interesting special case of the optimization problem
\eqref{eq:cvxtransformationpartialaccess} is when the states $\rho$ and
$\sigma$ take the form
\begin{equation}
  \rho = \sum_{k = 1}^n p_k \rho_k \otimes E_{k,k}
  \quad\text{and}\quad
  \sigma = \sum_{k = 1}^n p_k \sigma_k \otimes E_{k,k}
\end{equation}
for a probability vector $(p_1,\ldots,p_n)$ and collections of states
\begin{equation}
  \{\rho_1,\ldots,\rho_n\}\subseteq\D(\calX)
  \quad\text{and}\quad
  \{\sigma_1,\ldots,\sigma_n\}\subseteq\D(\calY).
\end{equation}
The objective function simplifies in this case to
\begin{equation}
  -\sum_{k=1}^n p_k \F(\sigma_k,\Phi(\rho_k)),
\end{equation}
and therefore the optimization concerns the average fidelity with which
a channel $\Phi$ maps each $\rho_k$ to $\sigma_k$.
The operator
\begin{equation}
  H = -\frac{1}{2} \bigl(\I_{\L(\calY)} \otimes \Psi_{\rho}^{\ast}\bigr)
  \Bigl( \sqrt{\sigma} \Bigl( \sqrt{\sigma}
  \bigl( \Phi \otimes \I_{\L(\calZ)} \bigr)(\rho)
  \sqrt{\sigma}\Bigr)^{-\frac{1}{2}}\sqrt{\sigma}\Bigr)
\end{equation}
simplifies in this case to
\begin{equation}
  H = -\frac{1}{2} \sum_{k = 1}^n
  p_k \sqrt{\sigma_k}\Bigl(\sqrt{\sigma_k} \tinyspace\Phi(\rho_k)
  \sqrt{\sigma_k}\Bigr)^{-\frac{1}{2}}\sqrt{\sigma_k} \otimes \rho_k^{\t}.
\end{equation}

An analogous criterion for optimality can also be found when the figure of
merit is the square of the fidelity, which is also jointly concave (see for
example Property 9.2.2 in \cite{Wilde17}).
The corresponding optimization problem is the following: 
\begin{equation}\label{eq:cvxtransformationfids2}
  \begin{aligned}
    &\text{minimize}  &&
    -\sum_{k=1}^m p_{k} \F\bigl(\sigma_{k},\Phi(\rho_{k})\bigr)^2\\
    &\text{subject to}&& \Phi\in\C(\calX,\calY).
  \end{aligned}
\end{equation}
By differentiating the objective function in \eqref{eq:cvxtransformationfids2},
one finds that a channel $\Phi$ is optimal precisely when the same conditions
in Theorem \ref{thm:cvxtransformationfids} are met, but with
$H\in\Herm(\calY\otimes\calX)$ given by
\begin{equation}
  H = - \sum_{k=1}^n p_k
  \F\bigl(\sigma_k,\Phi(\rho_k)\bigr) \sqrt{\sigma}
  \Bigl( \sqrt{\sigma} \bigl(\Phi\otimes\I_{\L(\calZ)}\bigr)(\rho)
  \sqrt{\sigma}\Bigr)^{-\frac{1}{2}}\sqrt{\sigma}
  \otimes \rho_k^{\t}.
\end{equation}

\subsubsection*{Objective functions based on trace distance}

Next we consider an optimization problem that is analogous to
\eqref{eq:cvxtransformationpartialaccess}, but based on the trace distance
rather than the fidelity:
\begin{equation}
  \label{eq:cvxtransformationtrdist}
  \begin{aligned}
    &\text{minimize} && \bignorm{\sigma-(\Phi\otimes\I_{\L(\calZ)})(\rho)}_1 \\
    &\text{subject to}&& \Phi\in\C(\calX,\calY).
  \end{aligned}
\end{equation}
The objective function of this optimization problem is convex (with respect to
$\Phi$), so we may use Theorem \ref{thm:cvxchannelschoigeneral} to obtain
optimality conditions for this problem.
We note that, similar to its variant based on the fidelity described above,
this optimization problem can be represented as a semidefinite program.

For this optimization problem, the optimality conditions we obtain
may not be efficiently checkable.
We do, however, obtain an efficiently checkable condition that is sufficient
for optimality, and we conjecture that this condition is also a necessary for
optimality.

\begin{corollary}\label{thm:cvxtransformationtrdist}
  Let $\rho\in\Pos(\calX\otimes\calZ)$ and $\sigma\in\Pos(\calY\otimes\calZ)$
  be positive semidefinite operators, for complex Euclidean spaces $\calX$,
  $\calY$, and $\calZ$, and let $\Phi\in\C(\calX,\calY)$ be a channel.
  The channel $\Phi$ is optimal for the state transformation problem in
  \eqref{eq:cvxtransformationtrdist} if and only if there exists an operator
  $Y\in\Herm(\calY\otimes\calZ)$ with $\norm{Y}_{\infty} = 1$ such that the
  following conditions are satisfied:
  \begin{enumerate}
  \item
    It is the case that
    $\bignorm{\sigma - (\Phi\otimes\I_{\L(\calZ)})(\rho)}_1
    = \bigl\langle Y ,
    \sigma - (\Phi\otimes\I_{\L(\calZ)})(\rho)\bigr\rangle$.
  \item
    The operator
    \begin{equation}
      \label{eq:Htracedist}
      H = \bigl(\I_{\L(\calY)}\otimes \Psi_{\rho}^{\ast}\bigr)(Y)
    \end{equation}
    satisfies
    \begin{equation}
      \Tr_{\calY}(HJ(\Phi))\in\Herm(\calX) \quad\text{and}\quad
      H \geq \I_{\calY}\otimes \Tr_{\calY}\bigl(HJ(\Phi)\bigr).
    \end{equation}
  \end{enumerate}
\end{corollary}

Corollary \ref{thm:cvxtransformationtrdist} follows directly from
Theorem \ref{thm:cvxchannelschoigeneral} and applying the rules of
subdifferentiation presented in Section \ref{sec:cvx}.
Indeed, operators of the form in \eqref{eq:Htracedist} are precisely the
elements of the subdifferential of the objective function in
\eqref{eq:cvxtransformationtrdist} at $J(\Phi)$.
As a generalization, one can replace the trace norm $\lVert\cdot\rVert_1$ in
the statement of Corollary \ref{thm:cvxtransformationtrdist} with any other norm
on operators, and replace $\lVert\cdot\rVert_\infty$ with the corresponding
dual norm.

In the event that the operator $\sigma - (\Phi\otimes\I_{\L(\calZ)})(\rho)$
arising in Corollary~\ref{thm:cvxtransformationtrdist} has no zero eigenvalues,
there is a unique choice of $Y$ for which the first condition of the theorem
holds.
Specifically, if
\begin{equation}
  \label{eq:spectral-decomposition-of-difference}
  \sigma - (\Phi\otimes\I_{\L(\calZ)})(\rho) = \sum_{k = 1}^m
  \lambda_k \Pi_k
\end{equation}
is a spectral decomposition where each $\lambda_k$ is nonzero, then the unique
operator $Y$ satisfying condition 1 in
Corollary~\ref{thm:cvxtransformationtrdist} is given by
\begin{equation}
  \label{eq:expression-of-Y}
  Y = \sum_{k = 1}^m \op{sign}(\lambda_k)\, \Pi_k.
\end{equation}
In this case it is sufficient for the second condition to be checked for this
unique choice of~$Y$, yielding an efficiently checkable optimality criterion.
However, if it is the case that
$\sigma - (\Phi\otimes\I_{\L(\calZ)})(\rho)$ has one or more zero eigenvalues,
then the first condition holds for a continuum of choices of $Y$, and from
Corollary \ref{thm:cvxtransformationtrdist} we conclude only that the optimality
of $\Phi$ is equivalent to the existence of at least one such choice of $Y$ for
which the second statement in the theorem hold.

It is reasonable, though, to view the operator $Y$ defined by
\eqref{eq:expression-of-Y}, where now it is to be understood that
$\op{sign}(0) = 0$, as a natural selection of an operator through which
optimality may be verified.
We conjecture, based on numerical evidence, that this choice yields an
efficiently checkable necessary and sufficient optimality condition.

\begin{conjecture}
  Let $\rho\in\D(\calX\otimes\calZ)$ and $\sigma\in\D(\calY\otimes\calZ)$
  be density operators, for complex Euclidean spaces $\calX$, $\calY$, and
  $\calZ$, and let $\Phi\in\C(\calX,\calY)$ be a channel.
  Let
  \begin{equation}
    \sigma - (\Phi\otimes\I_{\L(\calZ)})(\rho) = \sum_{k = 1}^m
    \lambda_k \Pi_k
  \end{equation}
  be a spectral decomposition, and define
  \begin{equation}
    Y = \sum_{k = 1}^m \op{sign}(\lambda_k)\, \Pi_k,
  \end{equation}
  where $\op{sign}(\alpha) = 1$ and $\op{sign}(-\alpha) = -1$ for
  all $\alpha > 0$ and $\op{sign}(0) = 0$.  
  The channel $\Phi$ is optimal for the state transformation problem in
  \eqref{eq:cvxtransformationtrdist} if and only if the operator
  \begin{equation}
    H = \bigl(\I_{\L(\calY)}\otimes \Psi_{\rho}^{\ast}\bigr)(Y)
  \end{equation}
  satisfies
  \begin{equation}
    \Tr_{\calY}(HJ(\Phi))\in\Herm(\calX) \quad\text{and}\quad
    H \geq \I_{\calY}\otimes \Tr_{\calY}\bigl(HJ(\Phi)\bigr).
  \end{equation}
\end{conjecture}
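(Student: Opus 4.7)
The plan is to split the biconditional into its two implications. The ``if'' direction is essentially a direct invocation of Theorem~\ref{thm:cvxtransformationtrdist}. Writing $\Delta = \sigma - (\Phi\otimes\I_{\L(\calZ)})(\rho)$, if $\Delta = 0$ then $\Phi$ trivially attains the minimum. Otherwise at least one $\lambda_k$ is nonzero, so $\norm{Y}_{\infty} = 1$, and the spectral decomposition yields $\langle Y, \Delta\rangle = \sum_k \op{sign}(\lambda_k)\lambda_k = \sum_k |\lambda_k| = \norm{\Delta}_1$. This verifies condition~1 of Theorem~\ref{thm:cvxtransformationtrdist}, while condition~2 is the hypothesis of the conjecture, so $\Phi$ is optimal.

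For the ``only if'' direction, suppose $\Phi$ is optimal and let $Y^{\ast}$ be an admissible witness as produced by Theorem~\ref{thm:cvxtransformationtrdist}. I would first characterize all such witnesses: the equality $\langle Y^{\ast}, \Delta\rangle = \norm{\Delta}_1$ combined with $\norm{Y^{\ast}}_{\infty}\leq 1$ saturates H\"older's inequality, which forces $Y^{\ast}$ to act as $\op{sign}(\lambda_k)\Pi_k$ on each nonzero eigenspace of $\Delta$; testing the operator norm on unit vectors supported in a single nonzero eigenspace then rules out any off-diagonal blocks of $Y^{\ast}$ connecting nonzero eigenspaces to $\ker(\Delta)$. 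Consequently $Y^{\ast} = Y + Y_0$ with $Y_0 \in \Herm(\ker\Delta)$ and $\norm{Y_0}_{\infty}\leq 1$, so the conjecture reduces to showing that $Y_0 = 0$ is admissible whenever some $Y_0$ is.

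The hardest step is this final reduction, which is the genuine content of the conjecture. My first attempt would be a perturbation argument: replace $\sigma$ by $\sigma + \varepsilon P$ for a suitably chosen $P$ so that $\Phi$ remains optimal for the perturbed problem but the new difference operator has no zero eigenvalues, forcing the canonical choice to be the unique admissible witness; then pass to the limit $\varepsilon \to 0$ using continuity of the spectral projections and closedness of the positive semidefinite cone. The fragile point is guaranteeing that $\Phi$ remains optimal under the perturbation, which is not automatic. Failing that, a Lagrange-duality route starting from~\eqref{eq:Lagrange-dual-problem} seems promising: given any dual-optimal pair $(Y^{\ast}, Z^{\ast})$ with $Z^{\ast} = \Tr_{\calY}(H^{\ast} J(\Phi))$, one would try to show via complementary slackness restricted to $\ker(\Delta)$ that the canonical $Y$ is also dual-optimal. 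I expect the principal obstruction to be controlling how $\Psi_\rho^{\ast}$ interacts with $\ker(\Delta)$, which is precisely the additional structural information that distinguishes this conjecture from Theorem~\ref{thm:cvxtransformationtrdist}.
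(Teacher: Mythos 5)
The statement you are attempting is labelled a \emph{conjecture} in the paper, and the paper offers no proof of it: the authors explicitly state that it is supported only by numerical evidence. So there is no argument of theirs to compare against, and the question is simply whether your proposal closes the gap. It does not. Your ``if'' direction is correct and is exactly the observation the paper already makes (the canonical $Y$ satisfies condition~1 of Theorem~\ref{thm:cvxtransformationtrdist} whenever $\Delta = \sigma - (\Phi\otimes\I_{\L(\calZ)})(\rho)$ is nonzero, and the case $\Delta = 0$ is trivial), and your characterization of the admissible witnesses as $Y^{\ast} = Y + Y_0$ with $Y_0$ Hermitian, supported on $\ker(\Delta)$, and $\norm{Y_0}_{\infty} \leq 1$ is a correct and standard saturation argument for the H\"older pairing between $\norm{\cdot}_1$ and $\norm{\cdot}_{\infty}$.

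The genuine gap is the step you yourself flag as ``the hardest step'': showing that if \emph{some} $Y_0$ yields an operator $H^{\ast} = (\I_{\L(\calY)}\otimes\Psi_{\rho}^{\ast})(Y + Y_0)$ satisfying the two positivity conditions, then $Y_0 = 0$ also does. Neither of your proposed routes is carried out. The perturbation argument fails at exactly the point you identify --- there is no reason the given $\Phi$ remains optimal after replacing $\sigma$ by $\sigma + \varepsilon P$, and without that the limiting argument says nothing about the original problem. The duality route likewise stalls: complementary slackness constrains $Y^{\ast}$ only through $\langle Y^{\ast}, \Delta\rangle$ and through the product of $J(\Phi)$ with $H^{\ast} - \I_{\calY}\otimes\Tr_{\calY}(H^{\ast}J(\Phi))$, and the map $\I_{\L(\calY)}\otimes\Psi_{\rho}^{\ast}$ can move a perturbation supported on $\ker(\Delta)$ in ways that destroy the operator inequality $H \geq \I_{\calY}\otimes\Tr_{\calY}(HJ(\Phi))$ for the canonical choice while preserving it for $Y + Y_0$. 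Ruling that out is precisely the content of the conjecture, and your proposal does not do so; as written it establishes only what Theorem~\ref{thm:cvxtransformationtrdist} already gives.
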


\subsubsection*{Objective functions based on relative entropy}

Finally, we consider a variant of the optimal state transformation problem
based on the quantum relative entropy.
For positive semidefinite operators $P,Q,\in\Pos(\calX)$, the quantum relative
entropy of $P$ with respect to $Q$ is defined as
\begin{equation*}
  \D(P\lVert Q) = \begin{cases}
    \Tr(P\log(P)) - \Tr(P\log(Q)) & \text{if}\;\im(P)\subseteq \im(Q)\\
    \infty & \text{otherwise}.
  \end{cases}
\end{equation*}
The specific variant of the problem to be considered is
\begin{equation}
  \label{eq:cvxtransformation-relative-entropy-1}
  \begin{aligned}
    &\text{minimize} && \D\bigl(\sigma
    \tinyspace\big\|\tinyspace
    (\Phi\otimes\I_{\L(\calZ)})(\rho)\bigr)\\
    &\text{subject to} && \Phi\in\C(\calX,\calY).
  \end{aligned}
\end{equation}
The relative entropy is jointly convex, which implies that it is convex in its
second argument, and therefore the problem above is a convex optimization
problem.
Because the relative entropy can be approximated through the use of
semidefinite programming \cite{Fawzi2018,Fawzi2017a}, it is possible to
efficiently approximate the optimization problem
\eqref{eq:cvxtransformation-relative-entropy-1} on a computer.

\begin{theorem}
  \label{thm:cvxtransformationrelent}
  Let $\rho\in\Pos(\calX\otimes\calZ)$ and $\sigma\in\Pos(\calY\otimes\calZ)$
  be positive semidefinite operators, for complex Euclidean spaces $\calX$,
  $\calY$, and $\calZ$, and assume that $\Tr_{\calX}(\rho)$ is a positive
  definite operator.
  A channel $\Phi\in\C(\calX,\calY)$ is optimal for the optimization problem 
  \eqref{eq:cvxtransformation-relative-entropy-1} if and only if the following
  two conditions are met:
  \begin{enumerate}
  \item[1.]
    $\im(\sigma) \subseteq \im\bigl( (\Phi\otimes\I_{\L(\calZ)})(\rho) \bigr)$.
  \item[2.]
    The operator
    \begin{equation}
      H =
      -(\I_{\L(\calY)} \otimes \Psi_{\rho}^{\ast})
      \bigl(
      D\log\bigl( \Pi \bigl(\Phi\otimes\I_{\L(\calZ)}\bigr)(\rho)
      \Pi\bigr)(\sigma) \bigr)
    \end{equation}
    satisfies
    \begin{equation}
      \Tr_{\calY}(H J(\Phi)) \in \Herm(\calX)
      \quad\text{and}\quad
      H \geq \I_{\calY}\otimes\Tr_{\calY}(H J(\Phi)).
    \end{equation}
    Here, $\Pi$ denotes the projection onto the image of $\sigma$ and
    $D\log(P)$ denotes the differential operator of the
    logarithm function at the operator $P$ (as described in \eqref{eq:DlogYZ}
    at the end of Appendix~\ref{appendix:gradients}).
  \end{enumerate}
\end{theorem}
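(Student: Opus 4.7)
The plan is to mirror the proof of Theorem~\ref{thm:cvxtransformationfids}. To accommodate the possibility that $\sigma$ is rank-deficient, I would first pick an isometry $A\in\L(\CC^r,\calY\otimes\calZ)$, with $r = \op{rank}(\sigma)$, satisfying $A A^{\ast} = \Pi$, and define
\begin{equation*}
g(Y) = \begin{cases}
\D(A^{\ast}\sigma A \,\|\, A^{\ast} Y A) & \text{if } A^{\ast} Y A \in \Pd(\CC^r),\\
\infty & \text{otherwise,}
\end{cases}
\end{equation*}
for all $Y\in\Herm(\calY\otimes\calZ)$. One checks that $g(Y) = \D(\sigma\|Y)$ whenever $\im(\sigma)\subseteq\im(Y)$, that $g$ is convex, and that the relative interior of its domain is $\{Y : A^{\ast} Y A \in \Pd(\CC^r)\}$.

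Next, I would compute the subdifferential of $g$. On $\relint(\dom g)$, the function $g$ is differentiable, and invoking the differential operator $D\log$ from \eqref{eq:DlogYZ}, a direct computation yields
\begin{equation*}
\nabla g(Y) = -D\log(\Pi Y \Pi)(\sigma),
\end{equation*}
where independence of the right-hand side from the choice of $A$ follows because the map $A^{\ast}(\cdot) A$ is an isometric isomorphism between $\Pi \Herm(\calY\otimes\calZ) \Pi$ and $\Herm(\CC^r)$, and $\sigma = \Pi\sigma\Pi$. For $Y\notin\relint(\dom g)$, an argument parallel to Case~2 of the proof of Theorem~\ref{thm:cvxtransformationfids} shows $\partial g(Y) = \varnothing$.

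Setting $\Lambda = \I_{\L(\calY)}\otimes\Psi_{\rho}$, the objective of \eqref{eq:cvxtransformation-relative-entropy-1} is $(g\circ\Lambda)(J(\Phi))$. The assumption that $\Tr_{\calX}(\rho)$ is positive definite ensures $\im(\Lambda) = \Herm(\calY\otimes\calZ)$ (see Remark~\ref{remark:fidelity-transformation}), so $\im(\Lambda)\cap\relint(\dom g)\neq\varnothing$, and Theorem~\ref{theorem:subdifferential-chain-rule} gives
\begin{equation*}
\partial(g\circ\Lambda)(J(\Phi)) = \Lambda^{\ast}\bigl(\partial g\bigl((\Phi\otimes\I_{\L(\calZ)})(\rho)\bigr)\bigr).
\end{equation*}
Applying Theorem~\ref{thm:cvxchannelschoigeneral} would then finish both directions: optimality of $\Phi$ forces this subdifferential to be nonempty, which in turn forces $(\Phi\otimes\I_{\L(\calZ)})(\rho)\in\relint(\dom g)$, i.e., condition~1, and the unique element pulled back through $\Lambda^{\ast}$ is exactly the operator $H$ appearing in condition~2, so condition~2 is equivalent to \eqref{eq:TrHXHIHX}.

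The main technical obstacle is the rigorous identification $\nabla g(Y) = -D\log(\Pi Y \Pi)(\sigma)$. This requires differentiating $Y\mapsto \Tr(\sigma\log(A^{\ast} Y A))$ via the chain rule and transporting the resulting derivative from the compressed space $\Herm(\CC^r)$ back to $\Herm(\calY\otimes\calZ)$, exploiting the fact that $\sigma$ is supported on $\im(\Pi)$ so that $\log(\Pi Y \Pi)$, although only defined on $\im(\Pi)$, is enough to evaluate the relevant trace functional. An auxiliary lemma in the spirit of Lemma~\ref{lemma:fidelity-derivative} would likely be used to package this calculation and to handle the boundary case where $A^{\ast} Y A$ fails to be positive definite.
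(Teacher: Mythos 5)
Your plan matches the paper's proof in all essentials: compress onto $\im(\sigma)$ via an isometry $A$ with $AA^{\ast}=\Pi$, differentiate the resulting relative-entropy function to get a gradient expressed through $D\log$ as in \eqref{eq:DlogYZ}, pull it back through $\Lambda = \I_{\L(\calY)}\otimes\Psi_{\rho}$ (the paper folds the compression and $\Lambda$ into a single map $\Xi(X)=A^{\ast}(\I_{\L(\calY)}\otimes\Psi_{\rho})(X)A$, which is equivalent), and invoke Theorem~\ref{thm:cvxchannelschoigeneral}. The gradient formula and the verification of the Slater-type condition via surjectivity of $\Lambda$ are exactly as in the paper.

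The one step that does not go through as written is your derivation of condition~1. Unlike the fidelity case, the domain of your $g$ is the \emph{open} set $\{Y : A^{\ast}YA\in\Pd(\CC^r)\}$ (the relative entropy jumps to $+\infty$ the moment $A^{\ast}YA$ leaves $\Pd$), so there are no boundary points of $\dom(g)$ at which a Case~2-style emptiness argument is needed --- that part of your plan is vacuous. The real issue is that if condition~1 fails then $J(\Phi)\notin\dom(g\circ\Lambda)$, and Theorem~\ref{thm:cvxchannelschoigeneral} presupposes $X\in\dom(f)$, so you cannot use it to conclude that such a $\Phi$ is non-optimal; ``optimal $\Rightarrow$ subdifferential nonempty'' is not available outside the domain. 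The paper closes this by exhibiting the channel $\Omega(X)=\Tr(X)\tinyspace\I_{\calY}/\dim(\calY)$, whose output $\I_{\calY}/\dim(\calY)\otimes\Tr_{\calX}(\rho)$ is positive definite by hypothesis, hence gives a finite objective value, so any channel with infinite value cannot be optimal. This is an easy repair (and is implicit in your verification that $\im(\Lambda)\cap\relint(\dom(g))\neq\varnothing$), but it needs to be said explicitly.
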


\begin{proof}
  If the first condition does not hold for a given channel $\Phi$, then the
  objective function in \eqref{eq:cvxtransformation-relative-entropy-1} takes
  an infinite value.
  However, by the assumption that $\Tr_{\calX}(\rho)$ is positive definite,
  one has that the channel
  \begin{equation}
    \Omega(X) = \frac{\Tr(X) \I_{\calY}}{\dim(\calY)}
  \end{equation}
  yields a finite value for the same objective function, implying that
  $\Phi$ is not optimal.
  If $\Phi$ is optimal, the first condition must therefore hold.
  It remains to prove that if $\Phi$ satisfies the first condition, then
  $\Phi$ is optimal if and only if the second condition holds.

  Let $r$ be the rank of $\sigma$, let $\calW = \CC^r$, and let
  $A\in\L(\calW,\calY\otimes\calZ)$ be an isometry that satisfies
  $A A^{\ast} = \Pi_{\im(\sigma)}$.
  Define a linear map $\Xi: \Herm(\calY\otimes\calX) \rightarrow \Herm(\calW)$
  as
  \begin{equation}
    \Xi(X) = A^{\ast} \bigl(\I_{\L(\calY)}\otimes \Psi_{\rho}\bigr)(X) A
  \end{equation}
  for all $X\in\Herm(\calY\otimes\calX)$.
  For every channel $\Phi\in\C(\calX,\calY)$ it is the case that
  \begin{equation}
    \D\bigl(\sigma
    \tinyspace\big\|\tinyspace
    (\Phi\otimes\I_{\L(\calZ)})(\rho)\bigr)
    = \D\bigl(A^{\ast} \sigma A \tinyspace\big\|\tinyspace \Xi(J(\Phi))\bigr).
  \end{equation}
  With this observation in mind, define a function
  $f:\Herm(\calY\otimes\calX)\rightarrow\RR\cup\{\infty\}$ as
  \begin{equation}
    f(X) = \begin{cases}
      \D\bigl(A^{\ast} \sigma A \tinyspace\big\|\tinyspace \Xi(X)\bigr)
      & \text{if}\; \Xi(X) \in \Pd(\calW)\\
      \infty & \text{otherwise,}
    \end{cases}
  \end{equation}
  so that a given channel $\Phi \in \C(\calX,\calY)$ is optimal for the
  problem \eqref{eq:cvxtransformation-relative-entropy-1} if and only if
  $J(\Phi)$ is optimal for the problem
  \begin{equation}
    \begin{aligned}
      &\text{minimize} && f(X)\\
      &\text{subject to} && X\in J(\C(\calX,\calY)).
    \end{aligned}
  \end{equation}

  The function $f$ is differentiable at every operator
  $X\in\Herm(\calY\otimes\calX)$ in its domain, with its gradient being
  \begin{equation}\begin{split}
      \nabla f(X) &= 
      -\Xi^{\ast} \bigl(
      D\log (\Xi(X))(A^{\ast} \sigma A)\bigr)\\
      &= -\bigl(\I_{\L(\calY)}\otimes\Psi_{\rho}^{\ast}\bigr)
      \bigl(
      D\log \bigl(
      \Pi \bigl(\I_{\L(\calY)}\otimes\Psi_{\rho}\bigr)(X)
      \Pi\bigr)(\sigma)\bigr).
  \end{split}\end{equation}
  As $f$ is differentiable at every point in its domain, it follows that
  \begin{equation}
    \partial f(J(\Phi)) = \{\nabla f(J(\Phi))\}
  \end{equation}
  for every $\Phi\in\C(\calX,\calY)$ for which 
  $\im(\sigma) \subseteq \im\bigl( (\Phi\otimes\I_{\L(\calZ)})(\rho) \bigr)$.
  For a given channel $\Phi\in\C(\calX,\calY)$ for which 
  $\im(\sigma) \subseteq \im\bigl( (\Phi\otimes\I_{\L(\calZ)})(\rho) \bigr)$,
  it therefore follows from Theorem~\ref{thm:cvxchannelschoigeneral} that
  $\Phi$ is optimal if and only if the operator
  \begin{equation}
    H = -\bigl(\I_{\L(\calY)}\otimes\Psi_{\rho}^{\ast}\bigr)
    \bigl(
    D\log\bigl(
    \Pi \bigl(\Phi \otimes \I_{\L(\calZ)}\bigr)(\rho)\Pi\bigr)(\sigma)\bigr)
  \end{equation}
  satisfies
  \begin{equation}
    \Tr_{\calY}(H J(\Phi)) \in \Herm(\calX)
    \quad\text{and}\quad
    H \geq \I_{\calY}\otimes\Tr_{\calY}(H J(\Phi)),
  \end{equation}
  which is the second condition in the statement of the theorem.
\end{proof}

\section*{Acknowledgments}

We thank Jamie Sikora for helpful comments and discussions and Mark Wilde
for valuable suggestions.
This research was supported by Canada's NSERC.


\appendix

\section{Gradients and subdifferentials of functions on matrices}
\label{appendix:gradients}

Results in the main body of this have required the computation of gradients
and subdifferentials for various functions mapping Hermitian operators to the
real numbers.
In this appendix we provide details on these computations.

\subsubsection*{Definitions and basic results}

It is appropriate to begin with some basic definitions.
Throughout this discussion, $\calX$, $\calY$, and $\calZ$ are arbitrary complex
Euclidean spaces.

Suppose that
\begin{equation}
  f:\Herm(\calX) \rightarrow \Herm(\calY)
\end{equation}
is a partial function, meaning that it may only be defined on some subset of
inputs $X\in\Herm(\calX)$.
The function $f$ is \emph{(Fr\'echet) differentiable} at $X\in\Herm(\calX)$
if there exists a linear map
\begin{equation}
  \Phi:\Herm(\calX)\rightarrow\Herm(\calY)
\end{equation}
for which the equation
\begin{equation}
  \lim_{Z\rightarrow 0} \frac{\bignorm{f(X+Z) - f(X) - \Phi(Z)}}{\norm{Z}} = 0
\end{equation}
is satisfied.
If there does exist such a map, it must be unique, and we denote it by
$Df(X)$.
Whenever $f$ is differentiable at $X$, it must be the case that
\begin{equation}
  Df(X)(Z) = \frac{\textup{d}}{\textup{d}t}f(X + tZ) \:\Big|_{t=0}
\end{equation}
for all choices of $Z\in\Herm(\calX)$.
In the special case that $\calY = \CC$, which is equivalent to $f$ taking the
form
\begin{equation}
  f:\Herm(\calX) \rightarrow \RR,
\end{equation}
one has that
\begin{equation}
  Df(X)(Z) = \big\langle \nabla f(X),Z \big\rangle
\end{equation}
for all $Z\in\Herm(\calX)$, assuming $f$ is differentiable at $X$.

The \emph{chain rule} for differentiation states that if
\begin{equation}
  \label{eq:f-and-g}
  f:\Herm(\calX)\rightarrow\Herm(\calY)
  \quad\text{and}\quad
  g:\Herm(\calY)\rightarrow\Herm(\calZ),
\end{equation}
$f$ is differentiable at $X$, and $g$ is differentiable at $Y = f(X)$, then
\begin{equation}
  D(g\circ f)(X)(Z) = Dg(f(X))(Df(X)(Z))
\end{equation}
for all $Z\in\Herm(\calX)$.

\subsubsection*{Affine linear functions}

A function
\begin{equation}
  f:\Herm(\calX) \rightarrow \Herm(\calY)
\end{equation}
is \emph{affine linear} if there exists a linear map
\begin{equation}
  \Phi:\Herm(\calX) \rightarrow \Herm(\calY)
\end{equation}
and an operator $Y \in \Herm(\calY)$ such that
\begin{equation}
  f(X) = \Phi(X) + Y
\end{equation}
for all $X\in\Herm(\calX)$.
Every such function is differentiable at every $X\in\Herm(\calX)$, with its
derivative given by
\begin{equation}
  Df(X) = \Phi.
\end{equation}

For an arbitrary function
\begin{equation}
  g:\Herm(\calY)\rightarrow\Herm(\calZ),
\end{equation}
one therefore finds that
\begin{equation}
  D(g\circ f)(X)(Z) = Dg(f(X))(\Phi(Z)),
\end{equation}
provided that $g$ is differentiable at $f(X)$, and if $g$ takes the form
\begin{equation}
  g:\Herm(\calY)\rightarrow\RR,
\end{equation}
then it is the case that
\begin{equation}
  \nabla(g\circ f)(X) = \Phi^{\ast}(\nabla g(f(X))).
\end{equation}

\subsubsection*{Real-valued functions extended to Hermitian operators}

If $f:\RR \rightarrow \RR$ is a function, then it may be extended to a function
of the form
\begin{equation}
  g:\Herm(\calX) \rightarrow \Herm(\calX)
\end{equation}
in a standard way:
for any choice of $X\in\Herm(\calX)$, one considers the spectral
decomposition
\begin{equation}
  \label{eq:spectral-decomposition-of-X}
  X = \sum_{k = 1}^m \lambda_k \Pi_k
\end{equation}
of $X$, then defines
\begin{equation}
  g(X) = \sum_{k = 1}^m f(\lambda_k) \Pi_k.
\end{equation}
(It is typical that this extended function is given the same name as the
original function on the real numbers, but for the sake of clarity we have
introduced a distinct name for the extended function.)
Naturally, if $f$ is defined only on a subset of $\RR$, then $g$ is defined for
all $X$ whose eigenvalues are contained in the domain of $f$.

The function $g$ is differentiable at every Hermitian operator whose
eigenvalues correspond to differentiable points of the function $f$.
The derivative of $g$ can be described explicitly by first defining a function
\begin{equation}
  h(\alpha,\beta) =
  \begin{cases}
    \frac{f(\alpha) - f(\beta)}{\alpha - \beta} &
    \text{if}\;\alpha\not=\beta\\
    f'(\alpha) & \text{if}\; \alpha = \beta
  \end{cases}
\end{equation}
for every pair of points $(\alpha,\beta)$ for which $f$ is differentiable at
both $\alpha$ and $\beta$.
(The function $h$ is sometimes called the \emph{first divided difference} of
$f$, although this terminology is sometimes limited to the case that
$\alpha\not=\beta$.)
In terms of this function, the derivative of $g$ at an operator $X$ having a spectral decomposition \eqref{eq:spectral-decomposition-of-X} is
\begin{equation}
  \label{eq:derivative-from-divided-difference}
  Dg(X)(Z) = \sum_{j,k=1}^m h(\lambda_j,\lambda_k)
  \Pi_j Z \Pi_k
\end{equation}
for every $Z\in\Herm(\calX)$ (see for example Theorem V.3.3 in \cite{Bhatia1997} and Theorem 3.25 in \cite{Hiai2014}).

\subsubsection*{Gradients of functions involving the fidelity}

Let $\calY$ be a complex Euclidean space, and consider the function
$f:\Herm(\calY)\rightarrow\RR\cup\{\infty\}$ defined as
\begin{equation}
  \label{eq:f-for-fidelity}
  f(Y) = \begin{cases}
    -\Tr \sqrt{Y} & \text{if}\; Y\in\Pos(\calY)\\
    \infty & \text{otherwise.}
  \end{cases}
\end{equation}
This function is differentiable at every positive definite operator
$Y\in\Pd(\calY)$, with its gradient being
\begin{equation}
  \label{eq:gradient-trace-root}
  \nabla f (Y) = -\frac{1}{2}Y^{-\frac{1}{2}}.
\end{equation}
One way to verify this expression is to first consider the function
$g(Y) = \sqrt{Y}$, defined for every positive semidefinite operator
$Y\in\Pos(\calY)$, and to use the formula
\eqref{eq:derivative-from-divided-difference} to conclude that
\begin{equation}
  Dg(Y)(Z) = \sum_{j,k=1}^m \frac{\Pi_j Z \Pi_k}{\sqrt{\lambda_j} +
    \sqrt{\lambda_k}},  
\end{equation}
provided that $Y$ is positive definite and has spectral decomposition
\begin{equation}
  Y = \sum_{k = 1}^m \lambda_k \Pi_k
\end{equation}
(see also Example 3.26 in \cite{Hiai2014}).
The equation \eqref{eq:gradient-trace-root} follows from the chain rule.

\begin{lemma}
  \label{lemma:fidelity-derivative}
  Let $\calX$ be a complex Euclidean space, let $P\in\Pd(\calX)$ be a positive
  definite operator, and define a function
  $g:\Herm(\calX)\rightarrow\RR\cup\{\infty\}$ as
  \begin{equation}
    g(X) = \begin{cases}
      -\F(P,X) & \text{if}\; X\in\Pos(\calX)\\
      \infty & \text{otherwise}.
    \end{cases}
  \end{equation}
  For every $X\in\Pd(\calX)$, the function $g$ is differentiable at $X$, and
  \begin{equation}\label{eq:gradgeometricmean}
    \nabla g(X) =  -\frac{1}{2} \bigl(P\,\#\,X^{-1}\bigr),
  \end{equation}
  where $P\,\#\,X^{-1}$ denotes the operator geometric mean of $P$ and
  $X^{-1}$, as discussed in Remark \ref{rem:geomean}.
  For every operator $X\in\Pos(\calX)$ that is not positive definite, it is the
  case that $\partial g(X) = \varnothing$.  
\end{lemma}

\begin{proof}
  Define a linear map $\Lambda:\Herm(\calX)\rightarrow\Herm(\calY)$ as
  \begin{equation}
    \Lambda(X) = \sqrt{P} X \sqrt{P}
  \end{equation}
  for all $X\in\Herm(\calX)$.
  It is the case that $g = f\circ \Lambda$, where $f$ is as defined in
  \eqref{eq:f-for-fidelity} above.
  By the chain rule for differentiation, one has
  \begin{equation}
    \nabla g(X) = -\frac{1}{2}\Lambda^{\ast}(\nabla f(\Lambda(X)))
    = -\frac{1}{2}
    \sqrt{P} \Bigl(\sqrt{P} X \sqrt{P} \Bigr)^{-\frac{1}{2}} \sqrt{P},
  \end{equation}
  provided that 
  \begin{equation}
    \sqrt{P} X \sqrt{P} \in \Pd(\calY),
  \end{equation}
  which is equivalent to $X \in \Pd(\calX)$.
  In the case when $X$ is positive definite, one therefore has that
  \begin{equation}
   \nabla g(X)= -\frac{1}{2}
   \sqrt{P} \Bigl(\sqrt{P^{-1}} X^{-1} \sqrt{P^{-1}} \Bigr)^{\frac{1}{2}}
   \sqrt{P} = -\frac{1}{2} \bigl(P\,\#\,X^{-1}\bigr),
  \end{equation}
  as desired.

  Now suppose that $X\in\Pos(\calX)$ is not positive definite, and let
  $\Delta$ be the projection onto the kernel of $\sqrt{P} X \sqrt{P}$, which is
  nonzero by the assumption that $X$ is not positive definite.
  Consider the operator 
  \begin{equation}
    \label{eq:Y}
    Y = X + \lambda P^{-\frac{1}{2}} \Delta P^{-\frac{1}{2}}
  \end{equation}
  for an arbitrary choice of $\lambda > 0$.
  It is the case that
  \begin{equation}
    g(Y) - g(X) = 
    \Tr \sqrt{ \sqrt{P} X \sqrt{P}}
    - \Tr \sqrt{ \sqrt{P} X \sqrt{P}
      + \lambda \Delta} = -\sqrt{\lambda} \Tr(\Delta).
  \end{equation}
  Thus, if there were to exist an element $Z\in\partial g(X)$, one would have
  \begin{equation}
    g(Y) - g(X) \geq \langle Z, Y - X \rangle
  \end{equation}
  for all $Y\in\dom(f)$, including the operator \eqref{eq:Y} for every
  $\lambda > 0$. 
  It would then follow that
  \begin{equation}
    \lambda \Bigl\langle Z,P^{-\frac{1}{2}} \Delta P^{-\frac{1}{2}}\Bigr\rangle
    = \langle Z, Y - X\rangle \leq g(Y) - g(X) = -\sqrt{\lambda} \Tr(\Delta),
  \end{equation}
  or equivalently
  \begin{equation}
    \frac{\Bigl\langle Z,P^{-\frac{1}{2}} \Delta P^{-\frac{1}{2}}\Bigr\rangle}{
      \Tr(\Delta)} \leq - \frac{1}{\sqrt{\lambda}},
  \end{equation}
  for every $\lambda > 0$, which is impossible given that the left-hand side is
  a finite value independent of $\lambda$ and the right-hand side approaches
  $-\infty$ as $\lambda$ approaches 0.
\end{proof}

\subsubsection*{Gradients of functions involving the quantum relative entropy}

Let $\calY$ be a complex Euclidean space, let $P\in\Pd(\calY)$ be a positive
definite operator, and consider the function
$f:\Herm(\calY)\rightarrow\RR\cup\{\infty\}$ defined as
\begin{equation}
  \label{eq:f-for-relative-entropy}
  f(Y) = \begin{cases}
    \D(P \| Y) & \text{if}\; Y\in\Pd(\calY)\\
    \infty & \text{otherwise.}
  \end{cases}
\end{equation}
This function is differentiable at every positive definite operator
$Y\in\Pd(\calY)$, with its gradient being
\begin{equation}
  \nabla f(Y) = -D\log (Y)(P),
\end{equation}
where $D\log (Y)$ is the derivative of the logarithm function at $Y$.
If
\begin{equation}
  Y = \sum_{k=1}^m \lambda_k \Pi_k
\end{equation}
is the spectral decomposition of $Y$, then by means of the expression
\eqref{eq:derivative-from-divided-difference} this function can be described
explicitly as
\begin{equation}
\label{eq:DlogYZ}
  D\log (Y)(Z)
  = \sum_{k=1}^m \frac{1}{\lambda_k} \Pi_k Z \Pi_k
  + \sum_{j\not=k} \frac{\log(\lambda_j) - \log(\lambda_k)}{\lambda_j -
    \lambda_k} \Pi_j Z \Pi_k.
\end{equation}


\section{Proof of multiplicativity of fidelity of recovery}

\label{app:multiplicativity}

In this appendix, we illustrate how Theorem \ref{thm:cvxtransformationfids} may
be used to prove that the generalized fidelity of recovery is multiplicative.
This fact was first proved in \cite{BertaT16}, through the expression of the
fidelity of recovery as a semidefinite program.

Let $\rho_0\in\D(\calX_0\otimes\calZ_0)$, $\rho_1\in\D(\calX_1\otimes\calZ_1)$, $\sigma_0\in\D(\calY_0\otimes\calZ_0)$, and $\sigma_1\in\D(\calY_1\otimes\calZ_1)$ be density operators for some complex Euclidean spaces $\calX_0,\calX_1,\calY_0,\calY_1,\calZ_0,\calZ_1$, and consider the following pair of optimization problems:
\begin{equation}\label{eq:FoRpair}
  \begin{aligned}
    &\text{maximize}  &&
    \F\bigl(\sigma_0,\bigl(\Phi\otimes\I_{\L(\calZ_0)}\bigr)(\rho_0)\bigr)
    &\qquad&\text{maximize}  &&
    \F\bigl(\sigma_1,\bigl(\Phi\otimes\I_{\L(\calZ_1)}\bigr)(\rho_1)\bigr)
    \\    
    & \text{subject to} && \Phi\in\C(\calX_0,\calY_0)
    &     & \text{subject to} && \Phi\in\C(\calX_1,\calY_1).
  \end{aligned}
\end{equation}
Define density operators $\rho\in\D(\calX_0\otimes\calX_1\otimes\calZ_0\otimes\calZ_1)$ and $\sigma\in\D(\calY_0\otimes\calY_1\otimes\calZ_0\otimes\calZ_1)$ as
\begin{equation}
 \rho = W(\rho_0\otimes\rho_1)W^* 
 \qquad \text{and}\qquad
 \sigma = V(\sigma_0\otimes\sigma_1)V^* 
\end{equation}
where 
\begin{align*}
 W&\in\L(\calX_0\otimes\calZ_0\otimes\calX_1\otimes\calZ_1,\calX_0\otimes\calX_1\otimes\calZ_0\otimes\calZ_1)\\ \text{and}\quad V&\in\L(\calY_0\otimes\calZ_0\otimes\calY_1\otimes\calZ_1,\calY_0\otimes\calY_1\otimes\calZ_0\otimes\calZ_1)
\end{align*}
 are the isometries defined as
\begin{align*}
 W(x_0\otimes z_0\otimes x_1\otimes z_1) &= x_0\otimes x_1\otimes z_0\otimes z_1
 \\ \text{and}\quad
 V(y_0\otimes z_0\otimes y_1\otimes z_1) &= y_0\otimes y_1\otimes z_0\otimes z_1
\end{align*}
for all choices of vectors $x_0\in\calX_0$, $x_1\in\calX_1$, $y_0\in\calY_0$, $y_1\in\calY_1$, $z_0\in\calZ_0$, and $z_1\in\calZ_1$. Consider now the following optimization problem:
\begin{equation}\label{eq:FoRproduct}
  \begin{aligned}
    &\text{maximize}  &&
    \F\bigl(\sigma,\bigl(\Phi\otimes\I_{\L(\calZ_0\otimes\calZ_1)}\bigr)(\rho)\bigr)
    \\    
    & \text{subject to} && \Phi\in\C(\calX_0\otimes\calX_1,\calY_0\otimes\calY_1).
  \end{aligned}
\end{equation}
The fact that the generalized fidelity of recovery is multiplicative can be stated as follows.
Let $\Phi_0\in\C(\calX_0,\calY_0)$ and $\Phi_1\in\C(\calX_1,\calY_1)$ be channels and suppose that this pair of channels is optimal for the pair of optimization problems in \eqref{eq:FoRpair}. Then the channel $\Phi_0\otimes\Phi_1\in\C(\calX_0\otimes\calX_1,\calY_0\otimes\calY_1)$ is optimal for the optimization problem in \eqref{eq:FoRproduct}. To prove this fact, we may define operators $H_0\in\Herm(\calY_0\otimes\calX_0)$ and $H_1\in\Herm(\calY_1\otimes\calX_1)$ as
\begin{align*}
 H_0&=-\frac{1}{2}(\I_{\L(\calY_0)}\otimes \Psi_{\rho_0}^{\ast})
      \Bigl(
      \sqrt{\sigma_0}\tinyspace
      \bigl( \sqrt{\sigma_0}\, (\Phi_0\otimes\I_{\L(\calZ_0)})(\rho_0)
      \sqrt{\sigma_0}\bigr)^{-\frac{1}{2}}
      \sqrt{\sigma_0}
      \Bigr)\\
 \text{and}\quad
 H_1&=-\frac{1}{2}(\I_{\L(\calY_1)}\otimes \Psi_{\rho_1}^{\ast})
      \Bigl(
      \sqrt{\sigma_1}\tinyspace
      \bigl( \sqrt{\sigma_1}\, (\Phi_1\otimes\I_{\L(\calZ_1)})(\rho_1)
      \sqrt{\sigma_1}\bigr)^{-\frac{1}{2}}
      \sqrt{\sigma_1}
      \Bigr).
\end{align*}
From the assumption that both $\Phi_0$ and $\Phi_1$ are optimal, it follows from Theorem \ref{thm:cvxtransformationfids} that the following conditions hold:
\begin{enumerate}
 \item $\im(\sigma_0) \subseteq \im\bigl(\bigl(\Phi_0\otimes\I_{\L(\calZ_0)}\bigr)(\rho_0) \bigr)$ and $\im(\sigma_1) \subseteq \im\bigl(\bigl(\Phi_1\otimes\I_{\L(\calZ_1)}\bigr)(\rho_1) \bigr)$
 \item $\Tr_{\calY_0}(H_0 J(\Phi_0)) \in \Herm(\calX_0)$ and  $\Tr_{\calY_1}(H_1 J(\Phi_1)) \in \Herm(\calX_1)$
 \item  $H_0 \geq \I_{\calY_0}\otimes\Tr_{\calY_0}(H_0 J(\Phi_0))$ and $H_1 \geq \I_{\calY_1}\otimes\Tr_{\calY_1}(H_1 J(\Phi_1))$.
\end{enumerate}
It is evident that $\im(\sigma_0\otimes\sigma_1) \subseteq \im\bigl(\bigl(\Phi_0\otimes\I_{\L(\calZ_0)}\otimes\Phi_1\otimes\I_{\L(\calZ_1)}\bigr)(\rho_0\otimes\rho_1) \bigr)$ by taking tensor products, and it follows that
\begin{equation}
 \im(\sigma) \subseteq\im\bigl(\bigl(\Phi_0\otimes\Phi_1\otimes\I_{\L(\calZ_0\otimes\calZ_1)}\bigr)(\rho)\bigr))
\end{equation}
by rearranging the spaces. Define the operator
\begin{equation}
H = -\frac{1}{2}(\I_{\L(\calY_0\otimes\calY_1)}\otimes \Psi_{\rho}^{\ast})
      \Bigl(
      \sqrt{\sigma}\tinyspace
      \bigl( \sqrt{\sigma}\, (\Phi_0\otimes\Phi_1\otimes\I_{\L(\calZ_0\otimes\calZ_1)})(\rho)
      \sqrt{\sigma}\bigr)^{-\frac{1}{2}}
      \sqrt{\sigma}
      \Bigr),
\end{equation}
and define the isometry $U\in\L(\calY_0\otimes\calX_0\otimes\calY_1\otimes\calX_1,\calY_0\otimes\calY_1\otimes\calX_0\otimes\calX_1)$ as the operator satisfying
\begin{equation}
 U(y_0\otimes x_0\otimes y_1\otimes x_1) = y_0\otimes y_1\otimes x_0\otimes x_1
\end{equation}
for every choice of vectors $x_0\in\calX_0$, $x_1\in\calX_1$, $y_0\in\calY_0$, and $y_1\in\calY_1$. With these choices of operators, it is evident that $H = U(H_0\otimes H_1)U^*$ and it follows that
\begin{equation}
 \Tr_{\calY_0\otimes\calY_1}(HJ(\Phi_0\otimes\Phi_1))  = \Tr_{\calY_0}(H_0 J(\Phi_0))\otimes\Tr_{\calY_1}(H_1 J(\Phi_1)) \in \Herm(\calX_0\otimes\calX_1),
\end{equation}
as the tensor product of Hermitian operators is Hermitian. Moreover it holds that
\begin{align*}
 H & = U(H_0\otimes H_1)U^*\\
   & \geq U\Bigl(\bigl(\I_{\calY_0}\otimes\Tr_{\calY_0}(H_0 J(\Phi_0))\bigr)\otimes\bigl(\I_{\calY_1}\otimes\Tr_{\calY_1}(H_1 J(\Phi_1))\bigr)\Bigr)U^*\\
    &= \I_{\calY_0\otimes\calY_1}\otimes\Tr_{\calY_0\otimes\calY_1}(H J(\Phi_0\otimes\Phi_1)),
\end{align*}
which follows from the fact that $P_0\otimes P_1\geq Q_0\otimes Q_1$ holds for every choice of positive semidefinite operators $P_0,P_1,Q_0,Q_1$ satisfying $P_0\geq Q_0$ and $P_1\geq Q_1$. As the conditions of Theorem \ref{thm:cvxtransformationfids} are satisfied for the optimization problem in \eqref{eq:FoRproduct}, one has that $\Phi_0\otimes\Phi_1$ is optimal for this problem.

\bibliographystyle{plainnat}


\end{document}